\theoremstyle{plain}
\newtheorem{thm}{\protect\theoremname}
\theoremstyle{plain}
\newtheorem{prop}[thm]{\protect\propositionname}
\theoremstyle{plain}
\theoremstyle{plain}
\newtheorem{lem}[thm]{\protect\lemmaname}
\theoremstyle{definition}
\newtheorem{example}[thm]{\protect\examplename}
\theoremstyle{definition}
\theoremstyle{plain}
\newtheorem{remark}[thm]{\protect\remarkname}
  \providecommand{\corollaryname}{Corollary}
  \providecommand{\examplename}{Example}
  \providecommand{\lemmaname}{Lemma}
  \providecommand{\propositionname}{Proposition}
  \providecommand{\theoremname}{Theorem}
  \providecommand{\definitionname}{Definition}
  \providecommand{\remarkname}{Remark}
\newcommand{\ceil}[1]{{\left\lceil {#1} \right\rceil}}
\newcommand{\floor}[1]{{\left\lfloor {#1} \right\rfloor}}
\newcommand{\Bk}[1]{{\Big( {#1} \Big) }}
\begin{document}

%
\title{Weierstrass Semigroups from Kummer Extensions}
%
%
%

\author{Shudi~Yang and~Chuangqiang~Hu   
\thanks{S. Yang is with the School of Mathematical
	Sciences, Qufu Normal University, Shandong 273165, P.R.China. \protect\\
		\quad C. Hu is with the School of Mathematics, Sun Yat-sen University, Guangzhou 510275, P.R.China.\protect\\	
	\protect\\
	E-mail: huchq@mail2.sysu.edu.cn,~{yangshd3@mail2.sysu.edu.cn}} \protect\\
\thanks{Manuscript received *********; revised ********.}
}

\maketitle

\begin{abstract}
The Weierstrass semigroups and pure gaps can be helpful in constructing codes with better parameters. In this paper, we investigate explicitly the minimal generating set of the Weierstrass semigroups associated with several totally ramified places over arbitrary Kummer extensions. Applying the techniques provided by Matthews in her previous work, we extend the results of specific Kummer extensions studied in the literature. Some examples are included to illustrate our results.

\end{abstract}

\begin{IEEEkeywords}
Kummer extension, Weierstrass semigroup, Weierstrass gap
\end{IEEEkeywords}

%
\IEEEpeerreviewmaketitle

\section{Introduction}
%
%
%
%

\IEEEPARstart{S}{ince} Goppa~\cite{goppa1977codes} constructed algebraic geometric (AG) codes from several rational places, the study of AG codes becomes an important instrument in coding theory. For a given AG code, the famous Riemann-Roch theorem gives a non-trivial lower bound, named Goppa bound, for the minimum distance in a very general setting~\cite{stichtenoth2009algebraic}.
Garcia, Kim and Lax improved the Goppa bound using arithmetical structure of the Weierstrass gaps at one place in~\cite{garcia1993consecutive,garcia1992goppa}.
Homma and Kim~\cite{Homma2001Goppa} introduced the concept of pure gaps and demonstrated a similar result for a divisor concerning a pair of places. And this was generalized to several places by Carvalho and Torres in~\cite{carvalho2005goppa}.

Weierstrass semigroups and pure gaps are of significant uses in the construction and analysis of AG codes. They would be applied to obtain codes with better parameters (see~\cite{HuYang2016Kummer,Masuda2}). We see that the Weierstrass semigroup associated with several distinct places is a generalization of the classical studied one at only one place. More information can be found in~\cite{niederreiter2001rational}.

Weierstrass semigroups over specific Kummer extensions were well-studied in the literature. For instance,
Matthews~\cite{matthews2004weierstrass} investigated the Weierstrass semigroup of any $ l $-tuple of collinear places on a Hermitian curve. In~\cite{matthews2005weierstrass}, Matthews generalized the results of~\cite{carvalho2005goppa,matthews2004weierstrass} by determining the Weierstrass semigroup of
any $ l $-tuple rational places on the quotient of the Hermitian curve defined by the equation $ y^q + y = x^m $ over
$ \mathbb{F}_{q^2} $ where $  m>2 $ is a divisor of $ q + 1 $.
However, little is known for general Kummer extensions, except that,  Masuda, Quoos and Sep{\'u}lveda~\cite{Masuda2} recently described the Weierstrass semigroups
of one- and two-places.

In this paper, our main interest will be in the research of the Weierstrass semigroup of
any $ l $-tuple rational places over arbitrary Kummer extensions.
This work is strongly inspired by the study of
\cite{Masuda2,matthews2004weierstrass,matthews2005weierstrass}. We shall explicitly calculate the minimal generating set of the Weierstrass semigroups by employing the techniques provided by Matthews in \cite{matthews2004weierstrass,matthews2005weierstrass}. At this point, we extend the results of~\cite{Masuda2,matthews2004weierstrass,matthews2005weierstrass}. We mention that our results can be employed to get linear codes attaining good or better records on the parameters, see~\cite{HuYang2016Kummer,Masuda2} for more details.

The remaider of the paper is organized as follows. In Section~\ref{sec:prelimi} we briefly recall some notations and preliminary results over arbitrary function fields. Section~\ref{sec:main results} focuses on the determination of the minimal generating sets of the Weierstrass semigroups over Kummer extensions. Finally, in Section~\ref{sec:Examples} we exhibit some examples by using our main results.

\section{Preliminaries}\label{sec:prelimi}

In this section, we introduce notations and present some basic facts on the minimal generating set of Weierstrass semigroups of distinct rational places over arbitrary function fields.

Let $ q  $ be a power of a prime $p$ and $ \mathbb{F}_{q} $ be a finite field with $ q $ elements.  We denote by $ F $ a function field with genus $ g $ over $ \mathbb{F}_q$ and by
$ \mathbb{P}_F $ the set of places of $ F $. The free abelian group generated by the places of $ F $ is denoted by $ \mathcal{D}_F $, whose element is called a divisor. Assume that  $ D=\sum_{P\in {\mathbb{P}_F}} n_P P$ is a divisor such that almost all $ n_P=0 $, then the degree of $ D $ is $ \deg(D)= \sum_{P\in \mathbb{P}_F} n_P $.
For a function $ f \in F $, the divisor of $ f $ will be denoted by $ (f) $ and the divisor of poles of $ f $ will be denoted by $ (f)_{\infty} $.

We introduce some notations concerning the Weierstrass semigroups. Given $ l $ distinct rational places of $ F $, named $ Q_1,\cdots, Q_l $, the Weierstrass semigroup
$ H(Q_1,\cdots, Q_l) $ is defined by
\[
\Big\{(s_1,\cdots, s_l)\in \mathbb{N}_0^l~\Big|~\exists f\in F~ \text{with}~ (f)_{\infty}=\sum_{i=1}^l s_i Q_i  \Big\},
\]
and the Weierstrass gap set $  G(Q_1,\cdots, Q_l)  $ is defined by $ \mathbb{N}_0^l \backslash H(Q_1,\cdots, Q_l) $, where $ \mathbb{N}_0 := \mathbb{N}\cup \{0\} $ denotes the set of nonnegative integers.
When comparing elements of $ \mathbb{N}_0^l $, we always employ a partial order $ \preceq $ defined by $ (n_1,\cdots,n_l) \preceq (p_1,\cdots,p_l) $ if and only if $ n_i\leqslant p_i $ for all $ i $, $ 1 \leqslant i \leqslant l $.

In order to present the minimal generating set for Weierstrass semigroups, more symbols should be described. To begin with, set $ \Gamma(Q_1):= H(Q_1) $.
For $ l\geqslant 2 $, define
\begin{equation*}
\Gamma(Q_1,\cdots,Q_l):= \Big\{ \mathbf{n} \in \mathbb{N}^l ~\Big|~ \mathbf{n} \text{ is minimal in } \{ \mathbf{p}\in H(Q_1,\cdots,Q_l)~|~ p_i=n_i\} \text{ for some } i, 1\leqslant i \leqslant l \Big \}.
\end{equation*}
For $ l\geqslant 1 $, define
\begin{equation*}
\tilde{\Gamma}(Q_1,\cdots,Q_l):= \Big\{ \mathbf{n}\in \mathbb{N}_0^l ~\Big|~ (n_{i_1},\cdots,n_{i_k}) \in \Gamma(Q_{i_1},\cdots,Q_{i_k}) \text{ for some } k, 1\leqslant k \leqslant l \text{ and } 1\leqslant i_1<\cdots < i_k \leqslant l \Big \}.
\end{equation*}
Here the elements in $ \mathbb{N}^l $ (also in $ \mathbb{N}_0^l $ ) are compared with respect to $ \preceq $. We remark that, for the general case, Matthews~\cite{matthews2004weierstrass,matthews2005weierstrass} proposed the notation of $ \Gamma(Q_1,\cdots,Q_l) $ and $ \tilde{\Gamma}(Q_1,\cdots,Q_l) $, while Kim~\cite{kim} settled the case of $ l=2 $. Here we collect two results from~\cite{matthews2004weierstrass} that will be used in the next section.

\begin{lem}[\cite{matthews2004weierstrass}]\label{lem:gamma_l}
	Let $ \mathbf{n} \in \mathbb{N}^l $. Then $ \mathbf{n} $ is minimal in
	$ \{\mathbf{p}\in H(Q_1,\cdots,Q_l)~|~p_i=n_i \}  $ with respect to $ \preceq $
	for some $ i $, $ 1\leqslant i \leqslant l $, if and only if $ \mathbf{n} $ is minimal in the set
	$ \{\mathbf{p}\in H(Q_1,\cdots,Q_l)~|~p_i=n_i \}  $ with respect to $ \preceq $
	for all $ i $, $ 1\leqslant i \leqslant l $.
\end{lem}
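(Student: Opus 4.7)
\smallskip

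\noindent\textbf{Proof proposal.} The ``for all'' direction implies ``for some'' direction is immediate once $1\leqslant l$, so the content is in the converse. Fix $i$ and assume $\mathbf{n}$ is minimal in $S_i:=\{\mathbf{p}\in H(Q_1,\ldots,Q_l):p_i=n_i\}$ with respect to $\preceq$. I would argue by contradiction: suppose there is some $j$ (necessarily $j\neq i$) and some $\mathbf{m}\in H(Q_1,\ldots,Q_l)$ with $m_j=n_j$, $\mathbf{m}\preceq\mathbf{n}$, and $\mathbf{m}\neq\mathbf{n}$, and I will manufacture an element of $S_i$ strictly below $\mathbf{n}$ to contradict the minimality hypothesis.

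The easy subcase is $m_i=n_i$: then $\mathbf{m}$ itself already lies in $S_i$ and $\mathbf{m}\prec\mathbf{n}$, giving the contradiction directly. The interesting subcase is $m_i<n_i$. Here I would pick functions $f,h\in F$ with $(f)_\infty=\sum_{k=1}^l m_k Q_k$ and $(h)_\infty=\sum_{k=1}^l n_k Q_k$, which exist because $\mathbf{m},\mathbf{n}\in H(Q_1,\ldots,Q_l)$ (note $\mathbf{n}\in S_i$ by definition of the set). Using local parameters $t_k$ at the rational places $Q_k$, expand
\begin{equation*}
f=c_k\,t_k^{-m_k}+\cdots,\qquad h=d_k\,t_k^{-n_k}+\cdots,\qquad c_k,d_k\in\mathbb{F}_q^\times.
\end{equation*}
Set $\alpha:=-d_j/c_j\in\mathbb{F}_q^\times$ and consider $g:=\alpha f+h\in F$. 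Since both $f,h$ have poles only among $Q_1,\ldots,Q_l$, so does $g$; call its pole divisor $\sum_{k=1}^l n'_k Q_k$ with $n'_k\in\mathbb{N}_0$. Then:
\begin{itemize}
\item At $Q_j$, the leading terms of $\alpha f$ and $h$ cancel by the choice of $\alpha$, so $n'_j<n_j$.
\item At $Q_i$, since $m_i<n_i$, the term $\alpha c_i t_i^{-m_i}$ cannot kill the dominant term $d_i t_i^{-n_i}$ coming from $h$, so $n'_i=n_i$.
\item At every other $Q_k$, the pole order of $g$ is at most $\max(m_k,n_k)=n_k$, so $n'_k\leqslant n_k$.
\end{itemize}
Thus $\mathbf{n}':=(n'_1,\ldots,n'_l)\in H(Q_1,\ldots,Q_l)$ satisfies $\mathbf{n}'\preceq\mathbf{n}$, $\mathbf{n}'\neq\mathbf{n}$ (strict at the $j$-th coordinate), and $n'_i=n_i$, so $\mathbf{n}'\in S_i$ lies strictly below $\mathbf{n}$, contradicting minimality.

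The one spot requiring care is ensuring the cancelling scalar $\alpha$ lies in the constant field: this is exactly where the rationality of the places $Q_k$ is used, guaranteeing $c_j,d_j\in\mathbb{F}_q$. Everything else is a bookkeeping check on the pole orders of a linear combination, which I regard as the easy part of the argument.
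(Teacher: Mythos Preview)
Your argument is correct. The paper itself does not supply a proof of this lemma; it is quoted from Matthews~\cite{matthews2004weierstrass}, and your linear-combination trick (cancel the leading term at $Q_j$ while preserving the pole order at $Q_i$ because $m_i<n_i$) is exactly the standard argument used there, going back to Kim~\cite{kim} in the two-point case. One tiny cosmetic point: you might note explicitly that $g\neq 0$ (since $m_i<n_i$ forces $(f)_\infty\neq(h)_\infty$), so that $(g)_\infty$ is defined and $\mathbf{n}'\in H(Q_1,\ldots,Q_l)$ is legitimate; otherwise the proof is complete as written.
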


\begin{lem}[\cite{matthews2004weierstrass}]\label{lem:gamma_linGapsets}
	Let $ Q_1,\cdots,Q_l $ be distinct rational places with $ l \geqslant 2 $. Then
	\begin{equation*}
	\Gamma(Q_1,\cdots,Q_l) \subseteq G(Q_1) \times \cdots \times G(Q_l).
	\end{equation*}
\end{lem}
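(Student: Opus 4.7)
The plan is to argue by contradiction: suppose $\mathbf{n} = (n_1,\ldots,n_l) \in \Gamma(Q_1,\ldots,Q_l)$ but $n_j \in H(Q_j)$ for some index $j$, and construct a strictly smaller element of $H(Q_1,\ldots,Q_l)$ whose $i$-th coordinate still equals $n_i$ for some $i \neq j$. This will contradict the minimality at index $i$ that is guaranteed by Lemma~\ref{lem:gamma_l}. The idea is the usual pole-cancellation trick: subtract an appropriate scalar multiple of a function supported only at $Q_j$ from a function realising $\mathbf{n}$, killing the leading pole at $Q_j$ without affecting the other coordinates.

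Concretely, since $\mathbf{n} \in \Gamma(Q_1,\ldots,Q_l)$, the set $\{\mathbf{p}\in H(Q_1,\ldots,Q_l)\mid p_i=n_i\}$ has $\mathbf{n}$ as its minimum; in particular $\mathbf{n} \in H(Q_1,\ldots,Q_l)$, so there exists $f \in F$ with $(f)_{\infty} = \sum_{k=1}^l n_k Q_k$. The hypothesis $n_j \in H(Q_j)$ supplies $g \in F$ with $(g)_{\infty} = n_j Q_j$. Because $Q_j$ is rational, the leading coefficients of $f$ and $g$ with respect to a local uniformiser at $Q_j$ lie in $\mathbb{F}_q^{*}$; call them $c_f$ and $c_g$. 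Setting $h := f - (c_f/c_g)\,g$, the leading terms at $Q_j$ cancel, giving $v_{Q_j}(h) > -n_j$, while $v_{Q_k}(h) = v_{Q_k}(f) = -n_k$ for every $k \neq j$, since $g$ has no poles outside $Q_j$.

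Since $l \geq 2$ and $\mathbf{n} \in \mathbb{N}^l$, one can fix an index $i \neq j$ with $n_i \geq 1$. Because $f$ has a pole at $Q_i$ but $g$ does not, $h$ is nonzero. Writing $m := -v_{Q_j}(h) \in \{0, 1, \ldots, n_j - 1\}$, we find $(h)_{\infty} = \sum_{k \neq j} n_k Q_k + m Q_j$, so the vector $(n_1,\ldots,n_{j-1},m,n_{j+1},\ldots,n_l)$ lies in $H(Q_1,\ldots,Q_l)$, has $i$-th coordinate equal to $n_i$, and is strictly $\prec \mathbf{n}$. This contradicts the minimality of $\mathbf{n}$ at index $i$ ensured by Lemma~\ref{lem:gamma_l}, completing the argument. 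The only subtle ingredient is the availability of the ratio $c_f/c_g$ inside the constant field, which is free of charge precisely because the places in question are rational; everything else is routine manipulation of pole divisors.
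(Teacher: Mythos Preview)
Your argument is correct and is the standard one. The paper does not supply its own proof of this lemma; it simply quotes the result from Matthews~\cite{matthews2004weierstrass}, so there is no in-paper proof to compare against. Your pole-cancellation construction is exactly the expected route: use rationality of $Q_j$ to match leading coefficients, kill the top pole at $Q_j$, and invoke Lemma~\ref{lem:gamma_l} to obtain a contradiction at any index $i\neq j$.

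One cosmetic point: you set $m:=-v_{Q_j}(h)$ and assert $m\in\{0,1,\ldots,n_j-1\}$, but $v_{Q_j}(h)$ could be strictly positive (if the cancellation happens to remove the pole at $Q_j$ entirely and then some), in which case $-v_{Q_j}(h)$ is negative. What you actually need is the pole order $m:=\max(0,-v_{Q_j}(h))$, which indeed lies in $\{0,1,\ldots,n_j-1\}$. This does not affect the logic, since in either case the $j$-th coordinate of the resulting pole vector is strictly smaller than $n_j$ while the $i$-th coordinate is unchanged.
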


The following theorem, due to Matthews~\cite{matthews2004weierstrass}, generalized the result of Kim~\cite{kim} and showed that the entire Weierstrass semigroup $ H(Q_1,\cdots,Q_l) $ is generated by the set $ \tilde{\Gamma}(Q_1,\cdots,Q_l) $. Thus $ \tilde{\Gamma}(Q_1,\cdots,Q_l) $ is called the minimal generating set of the Weierstrass semigroup $ H(Q_1,\cdots,Q_l) $.

\begin{lem}[\cite{matthews2004weierstrass}]\label{lem:Weisemigp}
	Let $ Q_1,\cdots, Q_l $ be distinct rational places with $ l \geqslant 2 $. Then	
	\begin{equation*}
	H(Q_1,\cdots, Q_l) =\Big\{ \mathrm{lub}\{\mathbf{u_1},\cdots,\mathbf{u_l}\}\in \mathbb{N}_0^l~\Big|
	~\mathbf{u_1},\cdots,\mathbf{u_l} \in \tilde{\Gamma}(Q_1,\cdots,Q_l) \Big\},
	\end{equation*}	
	where $ \mathrm{lub}\{\mathbf{u_1},\cdots,\mathbf{u_l}\} $ represents the least upper bound of vectors $ \mathbf{u_i} = (u_{i_1},\cdots,u_{i_l}) $ in $ \mathbb{N}_0^l$ for all $ i $, $ 1\leqslant i \leqslant l $, defined by
	\begin{equation*}
	\mathrm{lub}\{\mathbf{u_1},\cdots,\mathbf{u_l}\}:=(\max \{u_{1_1},\cdots, u_{l_1}\}, \cdots, \max \{ u_{1_l},\cdots,u_{l_l}\}).
	\end{equation*}	
\end{lem}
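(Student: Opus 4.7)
The plan is to prove the two inclusions of the stated equality separately. For $\supseteq$, the first observation is that $\tilde{\Gamma}(Q_1,\cdots,Q_l) \subseteq H(Q_1,\cdots,Q_l)$: if $\mathbf{n} \in \tilde{\Gamma}$ is witnessed by indices $i_1<\cdots<i_k$ with $(n_{i_1},\cdots,n_{i_k}) \in \Gamma(Q_{i_1},\cdots,Q_{i_k}) \subseteq H(Q_{i_1},\cdots,Q_{i_k})$, then a function $f \in F$ realizing that pole divisor has no poles outside $\{Q_{i_1},\cdots,Q_{i_k}\}$, so $\mathbf{n}$ also lies in $H(Q_1,\cdots,Q_l)$. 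Next, I will establish that $H(Q_1,\cdots,Q_l)$ is closed under $\mathrm{lub}$: given $f,g \in F$ with $(f)_\infty = \sum u_i Q_i$ and $(g)_\infty = \sum v_i Q_i$, the combination $f+cg$ has pole divisor $\sum \max(u_i,v_i)Q_i$ for all but finitely many scalars $c$ (after a harmless extension of the constant field if needed), since cancellation at a coordinate with $u_i=v_i>0$ is a single linear condition on $c$. Iterating this pairwise closure to $l$ vectors yields the inclusion $\supseteq$.

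For $\subseteq$, given $\mathbf{n}\in H(Q_1,\cdots,Q_l)$, I will construct vectors $\mathbf{u_1},\cdots,\mathbf{u_l}\in\tilde{\Gamma}$ with $\mathrm{lub}\{\mathbf{u_1},\cdots,\mathbf{u_l}\}=\mathbf{n}$. For each $i$ with $n_i>0$, the set $\{\mathbf{p}\in H(Q_1,\cdots,Q_l)\mid p_i=n_i,\ \mathbf{p}\preceq\mathbf{n}\}$ contains $\mathbf{n}$ and thus admits a $\preceq$-minimal element $\mathbf{u_i}$ by the descending chain condition on $\mathbb{N}_0^l$; Lemma~\ref{lem:gamma_l} then guarantees that $\mathbf{u_i}$ is simultaneously minimal in the analogously defined set for every one of its coordinates. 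For indices $i$ with $n_i=0$ I simply duplicate any one of the $\mathbf{u_j}$ already constructed, since repetition does not affect $\mathrm{lub}$ and any $\mathbf{u_j}\preceq\mathbf{n}$ has its $i$-th entry equal to $0$. The identity $\mathrm{lub}\{\mathbf{u_i}\}=\mathbf{n}$ then follows at once: each $\mathbf{u_i}\preceq\mathbf{n}$ gives $\mathrm{lub}\preceq\mathbf{n}$, while $u_{i,i}=n_i$ forces the $i$-th coordinate of the $\mathrm{lub}$ to be $n_i$.

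The main obstacle will be verifying that each constructed $\mathbf{u_i}$ actually lies in $\tilde{\Gamma}(Q_1,\cdots,Q_l)$. Let $S_i:=\{j:u_{i,j}>0\}$ denote its support. The restricted tuple $(u_{i,j})_{j\in S_i}$ lies in $H(Q_j:j\in S_i)$ because any function realizing $\mathbf{u_i}$ in $F$ has poles only at $\{Q_j:j\in S_i\}$. To upgrade this to membership in $\Gamma(Q_j:j\in S_i)$, I will argue by contradiction: a strictly smaller tuple $(w_j)_{j\in S_i}$ in $H(Q_j:j\in S_i)$ with $w_i=u_{i,i}$ would, after padding with zeros on the complementary indices, produce a vector in $H(Q_1,\cdots,Q_l)$ strictly below $\mathbf{u_i}$ having the same $i$-th coordinate, contradicting the minimality of $\mathbf{u_i}$ there. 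This yields $\mathbf{u_i}\in\tilde{\Gamma}$ and completes the proof.
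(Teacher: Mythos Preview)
The paper does not prove this lemma; it is stated with a citation to \cite{matthews2004weierstrass} and no proof is supplied in the present text. So there is nothing in the paper to compare your argument against line by line.

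That said, your argument is essentially the standard one and is correct. Two minor remarks. First, for the closure of $H(Q_1,\cdots,Q_l)$ under $\mathrm{lub}$, your ``$f+cg$ for generic $c$'' argument needs the constant field to be large enough to avoid the finitely many bad scalars; you flag this with a constant field extension, which is fine, but it does rely on the (true, but not entirely trivial) fact that $H(Q_1,\cdots,Q_l)$ is unchanged under such an extension. Second, in the $\subseteq$ direction you choose $\mathbf{u_i}$ minimal in the \emph{constrained} set $\{\mathbf{p}\in H: p_i=n_i,\ \mathbf{p}\preceq\mathbf{n}\}$, while $\Gamma$ is defined via minimality in the \emph{unconstrained} set $\{\mathbf{p}\in H: p_i=n_i\}$; it is worth saying explicitly that any $\mathbf{v}\preceq\mathbf{u_i}$ with $v_i=n_i$ automatically satisfies $\mathbf{v}\preceq\mathbf{n}$, so minimality in the former implies minimality in the latter, after which your support-restriction argument cleanly places $\mathbf{u_i}$ in $\tilde{\Gamma}$. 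Your invocation of Lemma~\ref{lem:gamma_l} is not actually needed for this step. Finally, the edge case $\mathbf{n}=\mathbf{0}$ (where no $\mathbf{u_j}$ has yet been constructed to duplicate) is harmless since $\mathbf{0}\in\tilde{\Gamma}(Q_1,\cdots,Q_l)$ via $0\in\Gamma(Q_1)=H(Q_1)$.
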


Actually, in order to determine $ H(Q_1,\cdots,Q_l) $, one only needs to determine $ \Gamma(Q_{i_1},\cdots,Q_{i_k}) $  for all $ 1\leqslant k \leqslant l$ and $  1\leqslant i_1<\cdots < i_k \leqslant l $. This is precisely what we will consider in the next section.

Now, we turn our attention to a characterization of $ \Gamma(Q_1,Q_2) $ in $  (G(Q_1) \times G(Q_2)) \cap H(Q_1, Q_2) $ established by Homma~\cite{homma1996weierstrass}. Denote the gap sequence at $ Q_1 $ by
$ \beta_1<\beta_2<\cdots<\beta_g $ and that at $ Q_2 $ by $ \beta'_1<\beta'_2<\cdots<\beta'_g $, where $ g $ is the genus of the function field. For each gap $ \beta_i $ at $ Q_1 $, the integer
$ n_{\beta_i} :=\min\{\gamma~|~(\beta_i,\gamma)\in H(Q_1, Q_2)\}  $  is a gap at  $ Q_2 $~\cite{kim}. So there exists a permutation $ \sigma $ of the set $ \{1,2,\cdots,g\} $ such that $ n_{\beta_i}=\beta'_{\sigma(i)} $. The graph of the bijective map between $ G(Q_1) $ and $ G(Q_2) $ is the set
$ \Gamma(Q_1,Q_2)=\{(\beta_i,\beta'_{\sigma(i)})\in \mathbb{N}^2~|~i=1,2,\cdots,g\} $. We recall the following lemma.

\begin{lem}[\cite{homma1996weierstrass}]\label{lem:gammaforinf}
	Let $ \Gamma' $ be a subset of $(G(Q_1) \times G(Q_2)) \cap H(Q_1, Q_2) $. If there exists a permutation $ \tau $ of $ \{1,2,\cdots,g\} $ such that
	$ \Gamma'=\{(\beta_i,\beta'_{\tau(i)})\in \mathbb{N}^2~|~i=1,2,\cdots,g\} $, then
	$ \Gamma'=\Gamma(Q_1,Q_2) $.
\end{lem}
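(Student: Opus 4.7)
The strategy is to show that the permutation $\tau$ appearing in the hypothesis must coincide with the permutation $\sigma$ introduced just before the lemma, which encodes $\Gamma(Q_1,Q_2)$ via $n_{\beta_i}=\beta'_{\sigma(i)}$. Once $\tau=\sigma$ is established, the equality $\Gamma'=\Gamma(Q_1,Q_2)$ is immediate from the two explicit descriptions.

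First, I would unpack the minimality built into the definition of $n_{\beta_i}$. Because $\Gamma'\subseteq H(Q_1,Q_2)$, for every $i\in\{1,\dots,g\}$ the pair $(\beta_i,\beta'_{\tau(i)})$ lies in $H(Q_1,Q_2)$. By the very definition $n_{\beta_i}=\min\{\gamma\mid (\beta_i,\gamma)\in H(Q_1,Q_2)\}=\beta'_{\sigma(i)}$, this forces
\begin{equation*}
\beta'_{\tau(i)} \;\geqslant\; \beta'_{\sigma(i)} \qquad \text{for every } i=1,\dots,g.
\end{equation*}
Since the gap sequence $\beta'_1<\beta'_2<\cdots<\beta'_g$ is strictly increasing, this inequality is equivalent to $\tau(i)\geqslant \sigma(i)$ for all $i$.

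Second, I would exploit that $\sigma$ and $\tau$ are both permutations of $\{1,2,\dots,g\}$. Summing the pointwise inequality over $i$ yields
\begin{equation*}
\sum_{i=1}^{g}\tau(i) \;\geqslant\; \sum_{i=1}^{g}\sigma(i),
\end{equation*}
but both sides equal $g(g+1)/2$. Hence equality holds throughout, and the pointwise inequality must in fact be an equality: $\tau(i)=\sigma(i)$ for every $i$. Consequently $\Gamma'$ and $\Gamma(Q_1,Q_2)$ are described by the same set of pairs, so $\Gamma'=\Gamma(Q_1,Q_2)$.

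There is no real obstacle here; the argument is a short counting/minimality trick, and the only input actually used from the hypothesis is that $\Gamma'$ lies inside $H(Q_1,Q_2)$ (the containment in $G(Q_1)\times G(Q_2)$ is automatic once one knows $\Gamma(Q_1,Q_2)\subseteq G(Q_1)\times G(Q_2)$, which is the case $l=2$ of Lemma~\ref{lem:gamma_linGapsets}). The delicate point to state carefully is the equivalence between $\beta'_{\tau(i)}\geqslant \beta'_{\sigma(i)}$ and $\tau(i)\geqslant \sigma(i)$, which relies on the strict monotonicity of the gap sequence.
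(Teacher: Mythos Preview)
Your argument is correct. The paper itself does not prove this lemma; it is quoted from Homma~\cite{homma1996weierstrass} without proof, so there is no in-paper argument to compare against. Your minimality-plus-counting trick is exactly the standard way to recover $\tau=\sigma$: from $(\beta_i,\beta'_{\tau(i)})\in H(Q_1,Q_2)$ and the definition of $n_{\beta_i}=\beta'_{\sigma(i)}$ you get $\beta'_{\tau(i)}\geqslant\beta'_{\sigma(i)}$, hence $\tau(i)\geqslant\sigma(i)$ by strict monotonicity of the gap sequence, and then the common sum $g(g+1)/2$ forces equality everywhere. One small remark: the argument implicitly uses that $\sigma$ really is a permutation (i.e.\ that the map $\beta_i\mapsto n_{\beta_i}$ is a bijection onto $G(Q_2)$), which the paper records just before the lemma citing Kim~\cite{kim}; you might make that dependence explicit.
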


\section{Main results}\label{sec:main results}

 Let  $ m \geqslant 2  $ be an integer coprime with $ p $. In this section, we restrict our attention to the Kummer extension $F_{\mathcal{K}}/{\mathbb{F}_q(x)}$ defined by
$
y^m=f(x)^{\lambda}=\prod_{i=1}^{r}(x-\alpha_i)^{\lambda},
$
where $ r>2 $, $ \gcd(m,r \lambda)=1 $ and the $ \alpha_i $'s are pairwise distinct elements in $ \mathbb{F}_q $ for $ 1\leqslant i\leqslant r $. The function field $ F_{\mathcal{K}} $ has
genus $ g= (r-1)(m-1)/2$. Let $ P_1,\cdots,P_r $ be the places of the rational function field $ F_{\mathcal{K}} $ associated to the zeros of $ x-\alpha_1,\cdots,x-\alpha_r $, respectively, and $ P_{\infty} $ be the unique place at infinity. It then follows from~\cite{Stichtenoth} that they are totally ramified in this extension.

The following proposition describes some principle divisors of a Kummer extension.
\begin{prop}\label{prop:divisor}
	Let $F_{\mathcal{K}}/{\mathbb{F}_q(x)}$ be a Kummer extension defined by
	\begin{equation}\label{eq:Kumext}
	y^m=f(x)^{\lambda}=\prod_{i=1}^{r}(x-\alpha_i)^{\lambda},
	\end{equation}
	where $ \alpha_i \in \mathbb{F}_q $ and $ \gcd(m, r\lambda) =1 $. Then we have the following divisors in $F$:
	\begin{enumerate}
		\item[(1)]	$ (x-\alpha_i)=mP_i-mP_{\infty}$,  for $ 1\leqslant i\leqslant r $;
		\item[(2)] $ (y)=\lambda P_1+\cdots +\lambda P_r-r\lambda P_{\infty} $;
		\item[(3)] $ (f(x))=\sum_{i=1}^r mP_i-rmP_{\infty} $;
	    \item[(4)] $ (z)=  P_1+\cdots + P_r-r P_{\infty} $, where $ z:=y^A f(x)^B $, and $ A $, $ B $
	    are integers such that $ A \lambda+Bm =1 $.
	\end{enumerate}	
	
\end{prop}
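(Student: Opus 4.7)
The plan is to verify each formula by exploiting the fact that, under the hypothesis $\gcd(m,r\lambda)=1$, the rational places $P_1,\ldots,P_r,P_\infty$ are totally ramified in $F_{\mathcal{K}}/\mathbb{F}_q(x)$, then reduce everything to a valuation computation at these places.

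First I would dispose of item~(1). Since $\gcd(m,\lambda)$ divides $\gcd(m,r\lambda)=1$, the standard Kummer extension theory (see Stichtenoth, Prop.~3.7.3) tells us that each finite place $(x-\alpha_i)$ of $\mathbb{F}_q(x)$ ramifies totally in $F_{\mathcal{K}}$, and the same holds at infinity because $\gcd(m,r\lambda)=1$. Therefore $e(P_i\mid (x-\alpha_i))=m$ and $e(P_\infty\mid P_\infty^{(x)})=m$. Combining this with the well-known rational valuations $v_{(x-\alpha_j)}(x-\alpha_i)=\delta_{ij}$ and $v_\infty(x-\alpha_i)=-1$, we obtain $v_{P_i}(x-\alpha_i)=m$, $v_{P_j}(x-\alpha_i)=0$ for $j\neq i$, and $v_{P_\infty}(x-\alpha_i)=-m$, which yields~(1). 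Item~(3) is then immediate by summing~(1) over $i=1,\ldots,r$ and using $(f(x))=\sum_{i=1}^{r}(x-\alpha_i)$.

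Next I would handle~(2). The key preliminary remark is that the support of $(y)$ is contained in $\{P_1,\ldots,P_r,P_\infty\}$: indeed, $y^m=f(x)^\lambda$ forces $y$ to be a unit at every place lying over a finite place of $\mathbb{F}_q(x)$ different from $(x-\alpha_1),\ldots,(x-\alpha_r)$, and over the place at infinity only $P_\infty$ appears by total ramification. For each such place $P\in\{P_1,\ldots,P_r,P_\infty\}$ one computes $m\,v_P(y)=v_P(y^m)=\lambda\,v_P(f(x))$, and substitution of the values from~(3) yields $v_{P_i}(y)=\lambda$ and $v_{P_\infty}(y)=-r\lambda$, giving~(2).

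Finally, for~(4) the B\'ezout identity $A\lambda+Bm=1$ (whose existence is guaranteed by $\gcd(m,\lambda)=1$) combined with the linearity of the divisor map gives
\[
(z)=A(y)+B(f(x))=(A\lambda+Bm)\sum_{i=1}^{r}P_i-r(A\lambda+Bm)P_\infty=\sum_{i=1}^{r}P_i-rP_\infty,
\]
as claimed.

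I do not expect any real obstacle: once the total ramification at the $P_i$ and at $P_\infty$ is in hand, the rest is bookkeeping with valuations. The only point requiring mild care is item~(2), where one must explicitly argue that no unexpected place enters the support of $(y)$ before doing the local valuation calculation.
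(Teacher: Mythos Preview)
Your proof is correct. The paper itself states this proposition without proof, relying only on the remark (just before the proposition, with a citation to Stichtenoth) that $P_1,\ldots,P_r,P_\infty$ are totally ramified; your argument is precisely the standard verification that fills in those details, so there is nothing to compare.
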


Denote by $ \lfloor x \rfloor $ the largest integer not greater than $ x  $ and by $ \lceil x \rceil $ the smallest integer not less than $ x $.  Masuda, Quoos and Sep{\'u}lveda~\cite{Masuda2} determined the Weierstrass semigroups related to one and two rational places.

\begin{lem}[\cite{Masuda2}]\label{lem:Weier_gap at one place}
	Let $F/K$ be the Kummer extension given by~\eqref{eq:Kumext}.  Suppose that $ P_{\infty} $ and $ P_1 $ are rational places of $ F $. Then
	\begin{align*}
	H(P_1) &=\mathbb{N}_0\backslash 	\Big\{  mk+j \in \mathbb{N}~\Big|
	~1\leqslant j \leqslant m-1- \left \lfloor \dfrac{m}{r} \right \rfloor,  ~0\leqslant k \leqslant r-2- \left \lfloor \dfrac{rj}{m} \right \rfloor \Big	\},\\
	H(P_{\infty}) &= \langle m,r\rangle.
	\end{align*}
\end{lem}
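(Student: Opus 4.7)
The plan is to treat both halves of the lemma with the same template: exhibit enough non-gaps to build a subsemigroup, then close the inclusion by a gap-counting argument, using that both $G(P_\infty)$ and $G(P_1)$ have size $g=(r-1)(m-1)/2$. For $H(P_\infty)$ this is short. By Proposition~\ref{prop:divisor}, the functions $x-\alpha_1$ and $z$ each have $P_\infty$ as their unique pole, of orders $m$ and $r$ respectively, so $\langle m,r\rangle\subseteq H(P_\infty)$. The hypothesis $\gcd(m,r\lambda)=1$ forces $\gcd(m,r)=1$, whence the Sylvester--Frobenius formula gives $|\mathbb{N}_0\setminus\langle m,r\rangle|=(m-1)(r-1)/2=g=|G(P_\infty)|$, and equality follows.

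For $H(P_1)$ I would produce, in every residue class modulo $m$, an explicit non-gap of the smallest possible size. Because $\gcd(m,\lambda)=1$, for each $s\in\{0,1,\ldots,m-1\}$ there is a unique $a_s\in\{0,\ldots,m-1\}$ with $a_s\lambda\equiv s\pmod m$; set $q_s:=(a_s\lambda-s)/m\geq 0$. The candidate is
\[
\phi_s := \frac{y^{a_s}}{f(x)^{q_s}\,(x-\alpha_1)^{\lceil rs/m\rceil}}.
\]
Adding divisors term by term via Proposition~\ref{prop:divisor} yields $v_{P_1}(\phi_s)=s-m\lceil rs/m\rceil$, $v_{P_i}(\phi_s)=s\geq 0$ for $2\leq i\leq r$, and $v_{P_\infty}(\phi_s)=m\lceil rs/m\rceil-rs\geq 0$. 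Writing $N_s:=m\lceil rs/m\rceil-s$, one concludes that $\phi_s$ has pole divisor $N_sP_1$, so $N_s\in H(P_1)$. Multiplying $\phi_s$ by $(x-\alpha_1)^{-k}$, whose unique pole is $mkP_1$, gives $N_s+km\in H(P_1)$ for every $k\geq 0$; hence $T:=\{N_s+km:0\leq s\leq m-1,\;k\geq 0\}\subseteq H(P_1)$.

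To upgrade this inclusion to equality, I would check that $|\mathbb{N}_0\setminus T|=g$. The values $N_s$ cover every residue modulo $m$, with $N_0=0$. Reindexing via $j=m-s$ for $s=1,\ldots,m-1$ and using $\lceil r(m-j)/m\rceil=r-\lfloor rj/m\rfloor$, the smallest element of $T$ in residue class $j\pmod m$ is $N_{m-j}=(r-1)m-m\lfloor rj/m\rfloor+j$; consequently the gaps of $T$ in that class are exactly $\{j+km:0\leq k\leq r-2-\lfloor rj/m\rfloor\}$. Summing,
\[
|\mathbb{N}_0\setminus T| \;=\; \sum_{j=1}^{m-1}\bigl(r-1-\lfloor rj/m\rfloor\bigr) \;=\; (m-1)(r-1)-\frac{(m-1)(r-1)}{2} \;=\; g,
\]
using the classical identity $\sum_{j=1}^{m-1}\lfloor rj/m\rfloor=(m-1)(r-1)/2$, valid because $\gcd(m,r)=1$. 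Since $|G(P_1)|=g$, we conclude $T=H(P_1)$, and the nonemptiness condition $r-2-\lfloor rj/m\rfloor\geq 0$ is equivalent to $j\leq m-1-\lfloor m/r\rfloor$ (again exploiting $\gcd(m,r)=1$), matching the stated range. The main obstacle in this plan is the construction of $\phi_s$: the exponents $q_s$ and $\lceil rs/m\rceil$ must be balanced so that the $P_\infty$-valuation is just non-negative while the $P_1$-pole is minimal within its residue class; once the construction is in hand, tightness comes for free from the matching gap count.
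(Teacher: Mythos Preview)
The paper does not prove this lemma; it is quoted verbatim from \cite{Masuda2} and used as input for the results of Section~\ref{sec:main results}. There is therefore no ``paper's own proof'' to compare against, and your proposal supplies an independent argument.

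Your argument is correct. For $H(P_\infty)$ the inclusion $\langle m,r\rangle\subseteq H(P_\infty)$ follows at once from Proposition~\ref{prop:divisor} (1) and (4), and the Sylvester count $|\mathbb{N}_0\setminus\langle m,r\rangle|=(m-1)(r-1)/2=g$ forces equality. For $H(P_1)$ your divisor computation for $\phi_s$ is right: using $a_s\lambda-q_sm=s$ one gets $v_{P_i}(\phi_s)=s$ for $i\geq 2$, $v_{P_\infty}(\phi_s)=m\lceil rs/m\rceil-rs\geq 0$, and $v_{P_1}(\phi_s)=-(m\lceil rs/m\rceil-s)=-N_s$, so $N_s\in H(P_1)$ and hence $T\subseteq H(P_1)$. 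The residue-class bookkeeping via $j=m-s$ and the identity $\sum_{j=1}^{m-1}\lfloor rj/m\rfloor=(m-1)(r-1)/2$ give $|\mathbb{N}_0\setminus T|=g$, so $T=H(P_1)$. Finally, your reduction of the range of $j$ is correct: since $\gcd(m,r)=1$ and $r>2$, the number $m/r$ is not an integer, whence $\lfloor rj/m\rfloor\leq r-2\Leftrightarrow j<m-m/r\Leftrightarrow j\leq m-1-\lfloor m/r\rfloor$.

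Two minor remarks. First, the nonnegativity of $q_s$ (which does hold, since $a_s\lambda<m$ forces $a_s\lambda=s$) is not actually needed: $\phi_s$ is a well-defined element of $F$ regardless of the sign of $q_s$, and only its divisor matters. Second, the construction simplifies if you replace $y^{a_s}/f(x)^{q_s}$ by $z^{s}$, where $z$ is the function of Proposition~\ref{prop:divisor} (4): then $\phi_s=z^{s}(x-\alpha_1)^{-\lceil rs/m\rceil}$ has exactly the same divisor and avoids the auxiliary integers $a_s,q_s$ altogether.
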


\begin{lem}[\cite{Masuda2}]\label{lem:Gammap2inf}
		Let $F/K$ be the Kummer extension given by~\eqref{eq:Kumext}. Suppose that $ P_{\infty} $ and $ P_1 $ are rational places of $ F $. Then
	\begin{align*}
	\Gamma(P_{\infty},P_1) =\Big\{ & (mk_0-rj, mk_1+j) \in \mathbb{N}^2~\Big|~1\leqslant j \leqslant m-1-\floor{\dfrac{m}{r}},\\
 & ~k_0+k_1=r-1,~ k_0 \geqslant \ceil{\frac{rj}{m}},  ~k_1\geqslant 0 \Big\}.
   \end{align*}
\end{lem}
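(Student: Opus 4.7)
My plan is to apply Lemma~\ref{lem:gammaforinf}, which says that if a subset $\Gamma'$ of $(G(P_\infty)\times G(P_1))\cap H(P_\infty,P_1)$ has cardinality $g$ and its first (equivalently, second) projection is a bijection onto $G(P_\infty)$ (resp.\ $G(P_1)$), then $\Gamma'=\Gamma(P_\infty,P_1)$. Call the right-hand side of the asserted equality $\Gamma'$. I will verify three things in turn: (i) each element of $\Gamma'$ lies in $G(P_\infty)\times G(P_1)$; (ii) each element lies in $H(P_\infty,P_1)$ via an explicit function built from Proposition~\ref{prop:divisor}; and (iii) the two coordinate projections are bijections onto $G(P_\infty)$ and $G(P_1)$ respectively. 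Throughout I would exploit the hypothesis $\gcd(m,r)=1$ (implied by $\gcd(m,r\lambda)=1$ and $r\geqslant 3$), so that $rj/m$ is non-integral whenever $1\leqslant j\leqslant m-1$, giving $\lceil rj/m\rceil=\lfloor rj/m\rfloor+1$.

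For (i), the bounds $k_0+k_1=r-1$ and $k_0\geqslant\lceil rj/m\rceil$ together with the observation above force $0\leqslant k_1\leqslant r-2-\lfloor rj/m\rfloor$, so that $mk_1+j\in G(P_1)$ by Lemma~\ref{lem:Weier_gap at one place}. For the first coordinate, a quick calculation using $\gcd(m,r)=1$ together with $1\leqslant k_0\leqslant r-1$ and $j\geqslant 1$ shows that $mk_0-rj$ cannot be written as $am+br$ with $a,b\geqslant 0$, so it is a gap of $\langle m,r\rangle=H(P_\infty)$. For (iii), every gap of $\langle m,r\rangle$ admits a unique representation $mk_0-rj$ with $1\leqslant k_0\leqslant r-1$, $j\geqslant 1$, $mk_0>rj$ (uniqueness via $\gcd(m,r)=1$), and $k_0\leqslant r-1$ forces $j\leqslant m-1-\lfloor m/r\rfloor$; this shows the first-coordinate projection is a bijection $\Gamma'\to G(P_\infty)$. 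A symmetric argument parametrizing $G(P_1)$ by $(j,k_1)$ gives the bijection on the second coordinate, and in particular $|\Gamma'|=g$.

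The main obstacle is step (ii), the construction of a function with a prescribed pole divisor. I propose to take
\[
F := \dfrac{z^{m-j}}{(x-\alpha_1)^{k_1+1}},
\]
where $z=y^{A}f(x)^{B}$ is the element of Proposition~\ref{prop:divisor}(4) satisfying $(z)=\sum_{i=1}^{r}P_i-rP_\infty$. Combining $(x-\alpha_1)=mP_1-mP_\infty$ with $k_0+k_1=r-1$, one computes
\[
(F)=(m-j)\sum_{i=2}^{r}P_i-(mk_1+j)P_1-(mk_0-rj)P_\infty.
\]
Since $1\leqslant j\leqslant m-1$ gives $m-j\geqslant 1>0$, and the inequalities $k_0\geqslant\lceil rj/m\rceil$ and $k_1\geqslant 0$ together with $j\geqslant 1$ guarantee $mk_0-rj>0$ and $mk_1+j>0$, the pole divisor of $F$ is exactly $(mk_0-rj)P_\infty+(mk_1+j)P_1$, so $(mk_0-rj,mk_1+j)\in H(P_\infty,P_1)$. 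With (i), (ii) and (iii) in hand, Lemma~\ref{lem:gammaforinf} applied with $\tau$ the permutation induced by these bijections gives $\Gamma'=\Gamma(P_\infty,P_1)$. The remaining verifications are routine bookkeeping with the coprimality of $m$ and $r$; the only creative input is guessing the exponents in $F$, which I would derive by solving the linear system obtained from matching valuations at $P_1$ and $P_\infty$.
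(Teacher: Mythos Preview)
The paper does not supply its own proof of Lemma~\ref{lem:Gammap2inf}; the result is quoted from~\cite{Masuda2}. Your argument is correct and, in fact, follows exactly the template the paper itself uses to prove the companion result Theorem~\ref{thm:Gamma2points} for $\Gamma(P_1,P_2)$: show $\Gamma'\subseteq G(P_\infty)\times G(P_1)$, exhibit an explicit function (here $z^{m-j}/(x-\alpha_1)^{k_1+1}$, whose divisor computation via Proposition~\ref{prop:divisor} is accurate) to obtain $\Gamma'\subseteq H(P_\infty,P_1)$, check that both coordinate projections are bijections onto the respective gap sets, and conclude by Lemma~\ref{lem:gammaforinf}. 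The arithmetic verifications you sketch (uniqueness of the representation $mk_0-rj$ of a gap of $\langle m,r\rangle$ with $1\leqslant k_0\leqslant r-1$, and the bound $j\leqslant m-1-\lfloor m/r\rfloor$) all go through using $\gcd(m,r)=1$, so nothing is missing.
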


The following results are extensions of Lemma~\ref{lem:Gammap2inf}, which will demonstrate the minimal generating set of Weierstrass semigroups of arbitrary rational places. We start with a simple case that provides
a crucial ingredient in the proof of the general case.


\begin{thm}\label{thm:Gamma2points}
	Let $F/K$ be the Kummer extension given by~\eqref{eq:Kumext} with genus $g \geqslant 1 $, $ P_1 $ and $  P_2 $ be two totally ramified places in $ \mathbb{P}_{F} $. Then
	\begin{align*}
	\Gamma(P_1,P_2) =\Big \{&
	(mk_1+j, m k_2 +j)\in \mathbb{N}^2 ~\Big|
	~1\leqslant j \leqslant m-1- \left \lfloor \dfrac{m}{r} \right \rfloor, \\
	& ~k_1,k_2 \geqslant 0,~ k_1+k_2=r-2- \left \lfloor \dfrac{rj}{m} \right \rfloor	 \Big\}.
	\end{align*}
\end{thm}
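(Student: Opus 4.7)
The plan is to apply Homma's criterion (Lemma~\ref{lem:gammaforinf}) to the proposed set
\[
\Gamma' := \Bigl\{(mk_1+j,\,mk_2+j) \,\Big|\, 1\le j\le m-1-\lfloor m/r\rfloor,\; k_1,k_2\ge 0,\; k_1+k_2=r-2-\lfloor rj/m\rfloor\Bigr\}.
\]
This reduces the problem to three verifications: (a) $\Gamma'\subseteq G(P_1)\times G(P_2)$; (b) $\Gamma'\subseteq H(P_1,P_2)$; and (c) $\Gamma'$ has the form $\{(\beta_i,\beta'_{\tau(i)})\}_{i=1}^{g}$ for some permutation $\tau$ of $\{1,\ldots,g\}$.

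Items (a) and (c) are quick. For (a), Lemma~\ref{lem:Weier_gap at one place} applied to both $P_1$ and $P_2$ shows that the constraints on $j$ and $k_\ell$ are precisely those parametrizing gaps $mk_\ell+j\in G(P_\ell)$; the description of $G(P_1)$ applies verbatim to $G(P_2)$ because the affine totally ramified places $P_1,\ldots,P_r$ enter symmetrically into~\eqref{eq:Kumext}. For (c), note that for each fixed $j$ the assignment $k_1\mapsto k_2=r-2-\lfloor rj/m\rfloor-k_1$ is an involution on $\{0,1,\ldots,r-2-\lfloor rj/m\rfloor\}$, so the first coordinates of $\Gamma'$ enumerate $G(P_1)$ exactly once each while the second coordinates likewise enumerate $G(P_2)$. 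Counting gives $|\Gamma'|=|G(P_1)|=g$, matching the requirement of Lemma~\ref{lem:gammaforinf}.

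The substantive step is (b), and the main obstacle is guessing a function whose pole divisor equals $(mk_1+j)P_1+(mk_2+j)P_2$. I would try
\[
h_{j,k_1,k_2} := z^{-j}\,(x-\alpha_1)^{-k_1}\,(x-\alpha_2)^{-k_2}\prod_{i=3}^{r}(x-\alpha_i),
\]
where $z$ is the function from Proposition~\ref{prop:divisor}(4). Combining parts (1) and (4) of that proposition, the valuations of $h_{j,k_1,k_2}$ work out to $-(mk_1+j)$ at $P_1$, $-(mk_2+j)$ at $P_2$, $m-j\ge 1$ at each $P_i$ with $3\le i\le r$, and $rj+m(k_1+k_2)-m(r-2)=rj-m\lfloor rj/m\rfloor$ at $P_\infty$. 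The hypothesis $\gcd(m,r\lambda)=1$ forces $\gcd(m,r)=1$, so this last value is strictly positive for $1\le j\le m-1$, whence $(h_{j,k_1,k_2})_\infty=(mk_1+j)P_1+(mk_2+j)P_2$ as required. The shape of $h_{j,k_1,k_2}$ is natural in hindsight: $z^{-j}$ realizes the residue class $j$ modulo $m$ simultaneously at $P_1$ and $P_2$ while contributing $-j$ at every affine $P_i$, and the product $\prod_{i=3}^r(x-\alpha_i)$ exactly annihilates the pole at $P_\infty$ that $z^{-j}$ would otherwise create, leaving $P_3,\ldots,P_r$ as zeros rather than new poles. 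Once this function is at hand, Lemma~\ref{lem:gammaforinf} immediately delivers $\Gamma'=\Gamma(P_1,P_2)$.
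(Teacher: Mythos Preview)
Your proposal is correct and follows essentially the same route as the paper: verify that $\Gamma'\subseteq (G(P_1)\times G(P_2))\cap H(P_1,P_2)$ has the graph-of-a-permutation form, then invoke Lemma~\ref{lem:gammaforinf}. The only visible difference is the witness function---the paper uses $h=z^{-mk_1-j}(x-\alpha_2)^{k_1-k_2}\prod_{\mu=3}^r(x-\alpha_\mu)^{k_1+1}$---but since $(z^m)=(f(x))$ by Proposition~\ref{prop:divisor}, your $h_{j,k_1,k_2}$ and the paper's $h$ differ only by a nonzero constant, so the two arguments are in fact identical.
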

\begin{proof}
	Let $ \Gamma'=  \Big\{
	(mk_1+j, m k_2 +j)\in \mathbb{N}^2 ~\Big|
	~1\leqslant j \leqslant m-1- \left \lfloor \dfrac{m}{r} \right \rfloor, ~k_1,k_2 \geqslant0, ~ k_1+k_2=r-2- \left \lfloor \dfrac{rj}{m} \right \rfloor	\Big\}$.
	The conditions $ k_1,k_2 \geqslant0 $ and $  k_1+k_2=r-2- \left \lfloor \dfrac{rj}{m} \right \rfloor $ implies that $ 0\leqslant k_i \leqslant r-2- \left \lfloor \dfrac{rj}{m} \right \rfloor $ for $ i=1,2 $.
	It follows from Lemma~\ref{lem:Weier_gap at one place} that the sets
	\begin{align*}
	\Big\{  mk_1+j \in \mathbb{N}~\Big|
	~1\leqslant j \leqslant m-1- \left \lfloor \dfrac{m}{r} \right \rfloor,  ~0\leqslant k_1 \leqslant r-2- \left \lfloor \dfrac{rj}{m} \right \rfloor \Big	\}
	\end{align*}
	and
	\begin{align*}
	\Big\{  m k_2 +j\in \mathbb{N} ~\Big|
	~1\leqslant j \leqslant m-1- \left \lfloor \dfrac{m}{r} \right \rfloor,  ~0\leqslant k_2 \leqslant r-2- \left \lfloor \dfrac{rj}{m} \right \rfloor	\Big\}
	\end{align*} 	
	are the Weierstrass gap sets $ G(P_1) $ and $ G(P_2) $,  respectively.
	Thus
	$ \Gamma'\subseteq G(P_1)\times G(P_2) $ and the cardinality of $ \Gamma' $ is $ g $. It can be computed from Proposition~\ref{prop:divisor} that the divisor of the function
	$
	h=z^{-mk_1-j}(x-\alpha_2)^{k_1-k_2}\prod_{\mu=3}^r (x-\alpha_{\mu})^{k_1+1}
    $
	is
	\begin{align*}
	(h)=& -(mk_1+j)P_1 - (m k_2 +j)P_2 + \sum_{\mu=3}^r (m-j)P_{\mu } + \\
	&(rj+(2-r)m+m(k_1+k_2))P_{\infty}.
	\end{align*}
	When $ 1\leqslant j  \leqslant m-1- \left \lfloor \dfrac{m}{r} \right \rfloor$, the valuations of $ h $ at $ P_{\mu }~ (\mu \geqslant 3) $ are positive as $ m-j >0 $,
	and the valuation of $ h $ at $ P_{\infty} $ is also positive, because
	\begin{align*}
	& rj+(2-r)m+m(k_1+k_2)\\
	& =rj+(2-r)m+m\Bk{r-2-\left \lfloor \dfrac{rj}{m} \right \rfloor} \\
	& =m\dfrac{rj}{m} - m \left \lfloor \dfrac{rj}{m} \right \rfloor >0.
	\end{align*}
	So the pole divisor of $ h $ is $ (h)_{\infty} = (mk_1+j)P_1 + (m k_2 +j)P_2 $. Thus $ \Gamma'\subseteq H(P_1,P_2) $ and we conclude from Lemma~\ref{lem:gammaforinf} that
	$ \Gamma(P_1,P_2) = \Gamma' $.
\end{proof}

Somewhat more generally, we have the following.

\begin{thm}\label{thm:Gammapinf}
	Let $F/K$ be the Kummer extension given by~\eqref{eq:Kumext} with genus $g \geqslant 1 $, $ P_{\infty} , P_1 ,\cdots,  P_l $ be totally ramified places in $ \mathbb{P}_{F} $. Then for $ 1\leqslant l \leqslant r-\ceil{\dfrac{r}{m}} $, we have
	\begin{align*}
	\Gamma(P_{\infty}, P_1,\cdots, P_l) = \Big\{&
	(mk_0-rj, mk_1+j, \cdots, mk_l+j ) \in \mathbb{N}^{l+1} ~\Big|
	~1\leqslant j \leqslant m-1- \floor {\dfrac{m}{r}} , \\
	& ~k_1,\cdots,k_l \geqslant 0,~k_0 \geqslant \ceil {\dfrac{rj}{m}},~ k_0+k_1+\cdots +k_l=r-l	\Big\},
	\end{align*}	
	and for $ r-\ceil{\dfrac{r}{m}} < l \leqslant r $,
	\begin{equation*}
	\Gamma(P_{\infty},P_1,\cdots, P_l) =\varnothing.
	\end{equation*}
\end{thm}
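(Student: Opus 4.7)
The plan is to generalize Theorem~\ref{thm:Gamma2points} to $l+1$ distinguished places simultaneously. Let $\Gamma'$ denote the candidate set on the right-hand side of the claim. First I would exhibit, for each tuple $(mk_0-rj, mk_1+j, \ldots, mk_l+j) \in \Gamma'$, the explicit function
\begin{equation*}
h = z^{-j}\prod_{i=1}^{l}(x-\alpha_i)^{-k_i}\prod_{\mu=l+1}^{r}(x-\alpha_\mu),
\end{equation*}
motivated by the two-place case. Using Proposition~\ref{prop:divisor} and the relation $k_0 = r-l-\sum_{i=1}^{l}k_i$, a direct computation gives
\begin{equation*}
(h) = (rj-mk_0)P_\infty - \sum_{i=1}^{l}(mk_i+j)P_i + (m-j)\sum_{\mu=l+1}^{r}P_\mu.
\end{equation*}
The hypotheses $1 \leq j \leq m-1-\floor{m/r}$, $k_0 \geq \ceil{rj/m}$, and $\gcd(m,r)=1$ force $m-j>0$ and $mk_0-rj>0$, so $(h)_\infty$ matches the prescribed tuple, whence $\Gamma' \subseteq H(P_\infty, P_1, \ldots, P_l)$. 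Lemma~\ref{lem:Weier_gap at one place} together with the identification $G(P_\infty) = \mathbb{N}_0 \setminus \langle m, r\rangle$ then verifies $\Gamma' \subseteq G(P_\infty) \times G(P_1) \times \cdots \times G(P_l)$.

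For the minimality and the reverse inclusion, I would expand any function $g$ with $(g)_\infty \preceq \mathbf{n}$ in its Kummer form $g = \sum_{b=0}^{m-1}a_b(x)y^b$. Since $\nu_{P_i}(a_b(x)y^b) = m\nu_{x-\alpha_i}(a_b)+b\lambda$ and $\gcd(\lambda,m)=1$, these $m$ valuations have pairwise distinct residues modulo $m$, so the minimum defining $\nu_{P_i}(g)$ is attained by a unique exponent $b_i^\ast$ satisfying $j_i \equiv -b_i^\ast\lambda \pmod m$; the analogous statement holds at $P_\infty$ via $\nu_{P_\infty}(y) = -r\lambda$. If some $b_{i_0}^\ast$ is not shared by any other distinguished place, then deleting the summand $a_{b_{i_0}^\ast}(x)y^{b_{i_0}^\ast}$ from $g$ strictly reduces the pole at $P_{i_0}$ while preserving the poles at every other $P_i$, contradicting minimality via Lemma~\ref{lem:gamma_l}. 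This forces $b_0^\ast = \cdots = b_l^\ast =: b^\ast$ and hence a common $j$; the balance $\sum_{i=0}^{l}k_i = r-l$ then emerges from comparing $\deg a_{b^\ast}(x)$ on $\mathbb{P}^1$ with its valuation contributions at the remaining ramified places.

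For the vacuous range $l > r-\ceil{r/m}$, the same residue analysis shows any hypothetical element of $\Gamma(P_\infty, P_1, \ldots, P_l)$ must satisfy $r-l \geq k_0 \geq \ceil{rj/m} \geq \ceil{r/m}$, contradicting $l > r-\ceil{r/m}$, so $\Gamma(P_\infty,P_1,\ldots,P_l) = \varnothing$. The main obstacle will be rigorously executing the uniformity step, in particular ensuring that deleting a single $y$-summand from $g$ does not introduce spurious poles at the remaining ramified places or at unramified points; this likely requires first normalizing $g$ so that each $a_b(x)$ is chosen to have minimal pole order at every place outside $\{\alpha_1, \ldots, \alpha_l, \infty\}$, at which point the abstract minimality of Lemma~\ref{lem:gamma_l} engages the explicit Kummer arithmetic through the distinct residue classes modulo $m$.
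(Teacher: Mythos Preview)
Your approach is genuinely different from the paper's: there the proof is by induction on $l$, reducing both inclusions $S_{\infty,l}\subseteq\Gamma_{\infty,l}$ and $\Gamma_{\infty,l}\subseteq S_{\infty,l}$ to the $(l-1)$-place case by multiplying the witnessing function by suitable powers of $(x-\alpha_i)$ and $z$, followed by a lengthy case analysis. Your idea of expanding $g=\sum_{b=0}^{m-1}a_b(x)y^b$ and using that the summands have pairwise distinct valuations modulo $m$ at every totally ramified place is more intrinsic and, if carried through, bypasses the induction entirely. Your explicit $h$ is correct, and the concern you flag about spurious poles is not the real obstacle: since $\{1,y,\ldots,y^{m-1}\}$ has unit discriminant away from $\{\alpha_1,\ldots,\alpha_r,\infty\}$ it is an integral basis there, so each $a_b$ is automatically regular at every unramified place, while at each $P_\mu$ with $\mu>l$ every individual summand inherits $\nu_{P_\mu}\geqslant 0$ from $g$ by the distinct-residues observation.

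The actual gap is that your deletion argument only serves the direction $\Gamma\subseteq\Gamma'$. Removing $a_{b_{i_0}^\ast}y^{b_{i_0}^\ast}$ and appealing to Lemma~\ref{lem:gamma_l} presupposes that $\mathbf{n}$ is already minimal, i.e.\ already in $\Gamma$; it then forces a common $j$, but it says nothing about why a given element of $\Gamma'$ is minimal. If $\mathbf{n}\in\Gamma'$ and $(g)_\infty=\mathbf{p}\preceq\mathbf{n}$ with $p_0=n_0$, deleting a summand of $g$ just produces a still smaller $\mathbf{p}'$, contradicting nothing. What works is to \emph{extract} rather than delete: set $g^\ast=a_{b^\ast}y^{b^\ast}$ with $b^\ast$ determined by the residue of $p_0=mk_0-rj$, observe that $(g^\ast)_\infty\preceq\mathbf{p}$ with equality at $P_\infty$ and that every coordinate of $(g^\ast)_\infty$ has residue $j$, and then run the degree formula for $a_{b^\ast}$ on $\mathbb{P}^1$ against the tight constraint $\sum_{i=0}^l k_i=r-l$ to force $(g^\ast)_\infty=\mathbf{n}$, hence $\mathbf{p}=\mathbf{n}$. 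A secondary point on the other inclusion: your degree comparison yields only $\sum_{i=0}^l k_i\geqslant r-l$; the reverse inequality is obtained by comparing $\mathbf{n}\in\Gamma$ with the pole divisor of your own $h$ (which has the same $P_1,\ldots,P_l$ coordinates but $P_\infty$-coordinate $m\bigl(r-l-\sum_{i\geqslant 1}k_i\bigr)-rj$) and invoking minimality of $\mathbf{n}$ once more.
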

\begin{proof}
	We begin by setting up some notation.
	Let $ \boldsymbol{\delta}_{\mathbf{k},j} := (mk_0-rj,mk_1+j, \cdots, mk_l+j) $ and it will only be used to describe vectors where $ 1 \leqslant j \leqslant m-1- \left \lfloor \dfrac{m}{r} \right \rfloor, k_1,\cdots,k_l \geqslant 0,~k_0 \geqslant \ceil {\dfrac{rj}{m}} $.
	For $ 2 \leqslant l \leqslant r $, we define
	\begin{align*}
	S_{\infty,l} &:= \Big\{  \boldsymbol{\delta}_{\mathbf{k},j} \in \mathbb{N}^{l+1}~\Big|~
	\sum_{i=0}^{l}k_i=r-l \Big\},\\
	\Gamma_{\infty,l} &:= \Gamma(P_{\infty},P_1,\cdots, P_l).
	\end{align*}
	 Actually, we obtain $ \ceil {\dfrac{r}{m}} \leqslant k_0 \leqslant r-l $, which gives that $ 1\leqslant l \leqslant r-\ceil{\dfrac{r}{m}} $. In the following, we will prove that $ \Gamma_{\infty,l} =S_{\infty,l}$ by induction on $ l $ for $ 1\leqslant l \leqslant r-\ceil{\dfrac{r}{m}} $. By Lemma~\ref{lem:Gammap2inf}, we have
	\begin{align*}
	\Gamma_{\infty,1} = \Big\{
	\boldsymbol{\delta}_{(k_0,k_1),j}\in\mathbb{N}^2 ~\Big|
	~ k_0+k_1=r-1 \Big\}=S_{\infty,1},
	\end{align*}
	which settles the case where $ l=1 $. We now proceed by induction on $ l \geqslant 2 $. Assume that $ \Gamma_{\infty,i} =S_{\infty,i}$ holds for all $ 1 \leqslant i \leqslant l-1 $.
	First, we claim that $ S_{\infty,l} \subseteq \Gamma_{\infty,l} $. Let
	$ \boldsymbol{\delta}_{\mathbf{k},j} \in  S_{\infty,l} $. Then
	\begin{align*}
	\left( \dfrac{\prod_{i=l+1}^r(x-\alpha_i)^{k_1+1}}{z^{mk_1+j} \prod_{i=2}^{l} (x-\alpha_i)^{k_i-k_1 } }\right)_{\infty}
	= (mk_0-rj)P_{\infty}+\sum_{i=1}^{l}(mk_i+j)P_i,
	\end{align*}
 since $ \sum_{i=0}^{l}k_i=r-l $. Hence, $ \boldsymbol{\delta}_{\mathbf{k},j} \in H_{\infty,l} $, where we write $ H(P_{\infty},P_1,\cdots,P_l) $ as $ H_{\infty,l}  $ for short.

In order to show that $ \boldsymbol{\delta}_{\mathbf{k},j} \in \Gamma_{\infty,l}   $, it suffices to prove that $ \boldsymbol{\delta}_{\mathbf{k},j} $ is minimal in $ \{ \mathbf{p}\in H_{\infty,l}~|~ p_0=mk_0-rj \} $. Suppose that $ \boldsymbol{\delta}_{\mathbf{k},j} $ is not minimal in
\begin{align*}
\{ \mathbf{p}\in H_{\infty,l}~|~ p_0=mk_0-rj \}.
\end{align*}
Then there exists $ \mathbf{u}:=(u_0,u_1,\cdots,u_l) \in H_{\infty,l} $ with $ u_0=mk_0-rj $, $ \mathbf{u} \preceq \boldsymbol{\delta}_{\mathbf{k},j} $ and $ \mathbf{u} \neq \boldsymbol{\delta}_{\mathbf{k},j} $. Let $ h \in \mathbb{F}_q(x) $ be such that $ (h)_{\infty} = u_0P_{\infty}+u_1P_1+\cdots+u_lP_l $. Note that $ \mathbf{u} \neq \boldsymbol{\delta}_{\mathbf{k},j} $ gives $ u_i < mk_i+j $ for some $ 1\leqslant i \leqslant l $. Without loss of generality, we may assume that $ u_l<mk_l+j $. Hence,
\begin{align*}
u_l= mk_l+j-t
\end{align*}
for some $ t \geqslant 1 $. In other words, we denote
\begin{align*}
\mathbf{u} = (mk_0-rj, u_1, \cdots,u_{l-1}, mk_l+j-t),
\end{align*}
where $ 0 \leqslant u_i \leqslant mk_i+j $ for all $ 1 \leqslant i \leqslant l-1 $. Thus, there are two special cases to consider:
\begin{enumerate}
	\item[(1)] $ j > t $
	\item[(2)] $ j \leqslant t $.
\end{enumerate}

Case (1): Suppose $ j > t $. In this case, we take $ v_0 := m(k_0+k_l)-rt $ and $v_i:= \max \{ u_i-(j-t),0 \} $ for $ 1 \leqslant i \leqslant l-1 $. Then
\begin{align*}
\left(  hz^{j-t} (x-\alpha_l)^{k_l}
\right)_{\infty} =
v_0 P_{\infty}  + \sum_{i=1}^{l-1} v_i P_i.
\end{align*}
Hence,
\begin{equation*}
\mathbf{v} :=
(v_0, v_1,\cdots,v_{l-1})\in H_{\infty,l-1}.
\end{equation*}

Let us introduce
\begin{equation*}
\mathbf{w} :=
(v_0, m (k_1+1)+t,mk_2+t,\cdots,mk_{l-1}+t) \in S_{\infty,l-1}.
\end{equation*}
By the induction hypothesis, $ S_{\infty,l-1}=\Gamma_{\infty,l-1}  $, and so
\begin{equation*}
\mathbf{w} \in \Gamma_{\infty,l-1}.
\end{equation*}
It follows from Lemma~\ref{lem:gamma_l} that $ \mathbf{w} $	is minimal in the set
$ \{ \mathbf{p}\in H_{\infty,l-1}~|~p_0=v_0 \} $.
Then we will find a contradiction. It is easy to check that $ v_i \leqslant mk_i+t  $  for $ 2 \leqslant i \leqslant l-1 $ and $v_1 < w_1 $ as
\begin{equation*}
u_1-(j-t) \leqslant mk_1+t <m (k_1+1)+t= w_1.
\end{equation*}
This means that $ \mathbf{v}\preceq \mathbf{w} $ and $ \mathbf{v} \neq \mathbf{w} $.

Now we have
\begin{align*}
& \mathbf{v} \in \{\mathbf{p}\in H_{\infty,l-1}~|~p_0=v_0 \},\\
& \mathbf{v} \preceq \mathbf{w}, \text{ and}  \\
& \mathbf{v} \neq \mathbf{w},
\end{align*}
which is a contradiction to the minimality of $ \mathbf{w} $  in
$ \{ \mathbf{p}\in H_{\infty,l-1}~|~p_0=v_0 \} $.

Case (2): Suppose $ j \leqslant t $. In this case, we set $ v_0 := m(k_0+k_l)-rj$. Then
\begin{equation*}
\left( h (x-\alpha_l)^{k_l}\right)_{\infty}
=v_0 P_{\infty} +\sum_{i=1}^{l-1} u_i P_i,
\end{equation*}		
which implies that
\begin{equation*}
\mathbf{v} :=
(v_0, u_1,\cdots,u_{l-1})\in H_{\infty, l-1}.
\end{equation*}
Then
\begin{equation*}
\mathbf{w} :=
(v_0, m (k_1+1)+j,mk_2+j,\cdots,mk_{l-1}+j) \in S_{\infty,l-1}.
\end{equation*}
It is easy to see that $ \mathbf{v}\preceq \mathbf{w} $ and $ \mathbf{v}\neq \mathbf{w} $, as $u_1<m(k_1+1)+j$ and $ u_i \leqslant mk_i+j  $  for $ 2 \leqslant i \leqslant l-1 $. Note that $ \mathbf{w} \in S_{\infty,l-1} $. By the induction hypothesis,
$ S_{\infty,l-1}=\Gamma_{\infty,l-1} $, and so
\begin{equation*}
\mathbf{w}  \in \Gamma_{\infty,l-1}.
\end{equation*}
By Lemma~\ref{lem:gamma_l}, $ \mathbf{w} $	is minimal in the set
$ \{ \mathbf{p}\in H_{\infty,l-1}~|~p_0=v_0 \} $.
This leads to a contradiction as
\begin{align*}
& \mathbf{v} \in \{\mathbf{p}\in H_{\infty,l-1}~|~p_0=v_0 \},\\
& \mathbf{v} \preceq \mathbf{w}, \text{ and} \\
& \mathbf{v} \neq \mathbf{w},
\end{align*}

Since both cases (1) and (2) yield a contradiction, it must be the case that
$ \boldsymbol{\delta}_{\mathbf{k},j} $ is minimal in $ \{ \mathbf{p}\in H_{\infty,l}~|~ p_0=mk_0-rj \} $. Therefore, by the definition of $ \Gamma_{\infty,l} $, we have
$ \boldsymbol{\delta}_{\mathbf{k},j} \in \Gamma_{\infty,l} $. This completes the proof of the claim that $ S_{\infty,l} \subseteq \Gamma_{\infty,l} $.

Next, we will show that  $ \Gamma_{\infty,l} \subseteq S_{\infty,l} $. Suppose not, and we suppose that there exists $ \mathbf{n}=(n_0,n_1,\cdots,n_l) \in \Gamma_{\infty,l} \setminus S_{\infty,l} $. Then there exists $ h \in \mathbb{F}_q(x) $ such that $ (h)_{\infty} = n_0P_{\infty}+ n_1P_1+\cdots+n_lP_l $. From Lemma~\ref{lem:gamma_linGapsets}, we have
\begin{equation*}
\mathbf{n} \in \Gamma_{\infty,l} \subseteq G(P_{\infty}) \times G(P_1) \times \cdots \times G(P_l).
\end{equation*}
It follows that
\begin{equation*}
\mathbf{n} =(mk_0-rj_0, mk_1+j_1,mk_2+j_2,\cdots,mk_l+j_l),
\end{equation*}
where $ 1 \leqslant j_i \leqslant m-1 $, $ k_0 \geqslant \ceil {\dfrac{rj_0}{m}} $ and $ k_i \geqslant 0 $
for all $ 0 \leqslant i \leqslant l $. Without loss of generality, we may
assume that $ j_l=\max\{ j_i~|~0 \leqslant i \leqslant l-1  \} $. Then
\begin{equation*}
\left( \dfrac{h (x-\alpha_l)^{k_l+1}}{(x-\alpha_1)^{k_l+1}}\right)_{\infty}
= n_0 P_{\infty} +(n_1+m(k_l+1))P_1 + \sum_{i=2}^{l-1}n_i P_i,
\end{equation*}	
which implies that $ (n_0,n_1+m(k_l+1), n_2, \cdots, n_{l-1})
 \in H_{\infty,l-1}\cap \mathbb{N}^l $.
By Lemma~\ref{lem:Weisemigp}, there exists $ \mathbf{\tilde{u}}=(u_0,u_1,\cdots,u_{l-1})\in \tilde{\Gamma}_{\infty,l-1} $ such that
\begin{equation}\label{eq:u1(l-1)}
\mathbf{\tilde{u}}\preceq (n_0,n_1+m(k_l+1), n_2, \cdots, n_{l-1}),
\end{equation}
and $ u_0=n_0=mk_0-rj_0>0 $. If $ u_1 \leqslant n_1 $, then $ (u_0,u_1,\cdots,u_{l-1},0) \preceq \mathbf{n} $. This yields a contradiction as
$ \mathbf{n}  $ is minimal in  $ \{ \mathbf{p}\in H_{\infty,l}~|~p_0=n_0 \} $.
Thus, we have $ u_1 > n_1 >0 $.

To get the desired conclusion, we will make use of the nonzero coordinates of $ \mathbf{\tilde{u}} $. Suppose that $ \{i_0,i_1,\cdots, i_{l-1}\}=\{0,1,\cdots,l-1\} $ is an index set such that $ 0=i_0<i_1 <i_2 <\cdots<i_k $ for some $ k $, $ 1 \leqslant k \leqslant l-1 $.  From the definition of $ \tilde{\Gamma}_{\infty,l-1}  $, we rearrange the coordinates of $ \mathbf{\tilde{u}} $, which are denoted by
\begin{align*}
u_{i_s}&=\left\{\begin{array}{lll}
m T_0-rj' && \text{ if } s=0, \\
mT_{i_s}+j' && \text{ if } 1\leqslant s \leqslant k,\\
0 && \text{ if }  k< s \leqslant l-1, \end{array}
\right.
\end{align*}
where
	$ (T_0, T_{i_1},  \cdots, T_{i_k}) \in \mathbb{N}_0^{k+1}	 $ and $ j'\in \mathbb{N} $ satisfying $ 1 \leqslant j' \leqslant m-1 $, $ T_0 \geqslant \ceil{\dfrac{rj'}{m}} $, $ T_{i_s}\geqslant 0 $ for $ 1\leqslant s \leqslant k $, and $ \sum_{s=0}^k T_{i_s} =r-k $.

 The nonzero coordinates of $ \mathbf{\tilde{u}} $ will form a new vector $ \mathbf{u} $ defined by
\begin{equation*}
{\mathbf{u}} := \boldsymbol{\delta}_{(T_{0}, T_{i_1},  \cdots, T_{i_k}),j'}
\in S_{\infty,k}.
\end{equation*}
 By the induction hypothesis, we have $ {\mathbf{u}}\in \Gamma_{\infty,k} $.

Note that $ i_1=1 $
as $ u_1 > n_1 >0 $. Since $ \gcd(r,m)=1 $ and
\begin{equation*}
mT_0-rj'=u_0=mk_0-rj_0,
\end{equation*}
 we obtain $ m |( j'-j_0) $. This forces $ j'=j_0 $ (and so $ T_0=k_0 $). As a result, we have
\begin{align*}
{\mathbf{u}} &= \boldsymbol{\delta}_{(T_0,T_1, T_{i_2} \cdots, T_{i_k}),j_0},\\
u_{i_s}& =\left\{\begin{array}{lll}
m T_0-rj_0 && \text{ if } s=0, \\
mT_{i_s}+j_0 && \text{ if } 1\leqslant s \leqslant k,\\
0 && \text{ if }  k< s \leqslant l-1, \end{array}
\right.
\end{align*}
with $ T_0+T_1+ T_{i_2}+ \cdots+ T_{i_k}= r-k $.
In particular, $ u_1=mT_1+j_0 $. In the following, we separate the proof into two cases:
\begin{enumerate}
	\item[(1)] $ u_1-m(k_l+1) \geqslant 0 $,
	\item[(2)] $ u_1-m(k_l+1) <0 $.
\end{enumerate}

Case (1): Suppose $ u_1-m(k_l+1) \geqslant 0 $. If $ u_1-m(k_l+1) = 0 $, then $ m | j_0 $, which is a contradiction. Therefore,
it must be $ u_1-m(k_l+1) >0 $. Set
\begin{equation*}
\tilde{\mathbf{v}} :=(u_0, u_1-m(k_l+1),u_2,\cdots,u_{l-1},mk_l+j_0).
\end{equation*}
By Equation~\eqref{eq:u1(l-1)}, and since $ j_0 \leqslant j_l $, we have
$ \tilde{\mathbf{v}}\preceq \mathbf{n} $. Denote
\begin{equation*}
\mathbf{v} :=\boldsymbol{\delta}_{(T_0,T_1-k_l-1, T_{i_2},   \cdots, T_{i_k},k_l),j_0}.
\end{equation*}	
As before, $ \mathbf{v} $ is formed by some of the nonzero coordinates of $ \tilde{\mathbf{v}} $. We verify that $ T_0+(T_1-k_l-1)+ \sum_{s=2}^{k} T_{i_s}+k_l =r-(k+1) $, which implies that $ \mathbf{v} \in S_{\infty,k+1} $. Since $S_{\infty,k+1} \subseteq \Gamma_{\infty,k+1} $, it follows that
$ \tilde{\mathbf{v}} \in \tilde{\Gamma}_{\infty,l} \subseteq H_{\infty,l} $. Notice that
$ \tilde{\mathbf{v}}\preceq \mathbf{n} $ and $ \mathbf{n} \in \Gamma_{\infty,l}  $. Therefore, $ \mathbf{n} = \tilde{\mathbf{v}} $ as otherwise $ \mathbf{n} $ is not minimal in $ \{ \mathbf{p}\in H_{\infty,l}~|~p_0=n_0 \} $. As none of $ n_i $ equals zero, we get that
$ \mathbf{n} = \tilde{\mathbf{v}}= \mathbf{v}\in S_{\infty,l}  $, which is a contradiction.

Case (2): Suppose that $ u_1-m(k_l+1) <0 $. At this point, we consider separately two subcases:
\begin{enumerate}
	\item[(a)] $ k_1 > 0 $,
	\item[(b)] $ k_1 = 0 $.
\end{enumerate}
Subcase (a): Suppose $ k_1 > 0 $. Let
\begin{equation*}
\tilde{\mathbf{v}} :=(u_0, m(k_1-1)+j_0,u_2,\cdots,u_{l-1},m(T_1-k_1)+j_0).
\end{equation*}
We are going to show that $ \tilde{\mathbf{v}}\preceq \mathbf{n} $ and $ \tilde{\mathbf{v}}\neq \mathbf{n} $.    	
At first sight, Equation~\eqref{eq:u1(l-1)} means that $ u_i \leqslant n_i  $ for $ 2\leqslant i \leqslant l-1 $. Since
$ j_0-m<0< j_1 $, we have
\begin{equation*}
v_1=m(k_1-1)+j_0 < mk_1+j_1=n_1.
\end{equation*}
Note that $ u_1=mT_1+j_0 < m(k_l+1) $.
This implies that
\begin{equation*}
v_l=m(T_1-k_1)+j_0 \leqslant mT_1+j_0-m < mk_l < mk_l+j_l=n_l.
\end{equation*}
So  $ \tilde{\mathbf{v}}\preceq \mathbf{n} $ and $ \tilde{\mathbf{v}}\neq \mathbf{n} $.	 Let us express $ \mathbf{v} $ as
\begin{equation*}
\mathbf{v} :=\boldsymbol{\delta}_{(T_0,k_1-1, T_{i_2},  \cdots, T_{i_k}, T_1-k_1),j_0}.
\end{equation*}
We claim that $ \mathbf{v} \in S_{\infty,k+1} $.
Clearly, $ T_0=k_0 $, $ k_1-1 \geqslant 0 $. Suppose that $ T_1-k_1 <0 $, then it must be
\begin{equation*}
u_1=mT_1+j_0 \leqslant mk_1+j_1 =n_1,
\end{equation*} contradicting the fact that $ u_1>n_1 $.
Therefore, $ T_1-k_1 \geqslant 0 $. Moreover, it is easy to check that
\begin{equation*}
T_0+(k_1-1)+  \sum_{s=2}^{k}T_{i_s}+(T_1-k_1)=r-(k+1).
\end{equation*}
So $ \mathbf{v} \in S_{\infty,k+1} \subseteq \Gamma_{\infty,k+1} $.
It follows that $ \tilde{\mathbf{v}} \in H_{\infty,l} $. Hence we have
\begin{align*}
& \tilde{\mathbf{v}} \in \{\mathbf{p}\in H_{\infty,l}~|~p_0=n_0 \},\\
& \tilde{\mathbf{v}} \preceq \mathbf{n}, \text{ and} \\
& \tilde{\mathbf{v}} \neq \mathbf{n},
\end{align*}
contradicting the minimality of $ \mathbf{n} $  in $\{ \mathbf{p}\in H_{\infty,l}~|~p_0=n_0 \} $. The proof in this subcase is completed.

Subcase (b): Suppose $ k_1 = 0 $. Set
\begin{equation*}
\tilde{\mathbf{v}} :=(u_0,0,u_2,\cdots,u_{l-1},mT_1+j_0).
\end{equation*} 	
Since $ u_1=mT_1+j_0 < m(k_l+1) $ means that $ T_1 \leqslant k_l  $, we have
\begin{equation*}
v_l=mT_1+j_0 \leqslant mk_l+j_l=n_l,
\end{equation*}	
as $ j_0 \leqslant j_l $. This yields that $ \tilde{\mathbf{v}}\preceq \mathbf{n} $ and $ \tilde{\mathbf{v}}\neq \mathbf{n} $. Let
\begin{equation*}
\mathbf{v} :=\boldsymbol{\delta}_{(T_0, T_1, T_{i_2},  \cdots, T_{i_k} ),j_0}.
\end{equation*}	
It is easy to see that $ \mathbf{v} \in S_{\infty,k}$ as
$ \sum_{s=0}^{k} T_{i_s} =r-k $. As before, it follows that $ \tilde{\mathbf{v}} \in H_{\infty,l} $ and $ \tilde{\mathbf{v}} \in \{ \mathbf{p}\in H_{\infty,l}~|~p_0=n_0 \} $. But $ \tilde{\mathbf{v}}\preceq \mathbf{n} $ and $ \tilde{\mathbf{v}}\neq \mathbf{n} $. This contradicts the minimality of $ \mathbf{n} $  in $\{ \mathbf{p}\in H_{\infty,l}~|~p_0=n_0 \} $, which finishes the proof in this subcase.

Since both cases (1) and (2) yield a contradiction, it must be te case that no such $\mathbf{n} $ exists. Hence, $ \Gamma_{\infty,l} \setminus S_{\infty,l}  =\varnothing $. This establishes that $ \Gamma_{\infty,l}  \subseteq S_{\infty,l}  $, concluding the proof that
$ \Gamma_{\infty,l}  = S_{\infty,l}  $.

Now suppose that $ r-\ceil{\dfrac{r}{m}} < l \leqslant r $. If $ \boldsymbol{\delta}_{\mathbf{k},j} \in \Gamma_{\infty,l} $, then
\begin{equation*}
\ceil{\dfrac{rj}{m}}\leqslant \sum_{i=1}^{l} k_i = r-l < \ceil{\dfrac{r}{m}},
\end{equation*}	
which is a contradiction as $ 1 \leqslant j \leqslant m-1 $. Therefore, $ \Gamma_{\infty,l} =\varnothing $.
\end{proof}

The above results enable one to deal with the general case for arbitrary distinct places excluding the place at infinity.

\begin{thm}\label{thm:lpoint}
	Let $F/K$ be the Kummer extension given by~\eqref{eq:Kumext} with genus $g \geqslant 1 $, $ P_1 ,\cdots,  P_l $ be totally ramified places in $ \mathbb{P}_{F} $. Then for $ 2\leqslant l \leqslant r-\floor{\dfrac{r}{m}} $,
	\begin{align*}
	\Gamma(P_1,P_2,\cdots, P_l) = \Big\{&
	(mk_1+j, mk_2+j,\cdots, mk_l+j) \in \mathbb{N}^l ~\Big|
	~1\leqslant j \leqslant m-1- \floor {\dfrac{m}{r}} , \\
	& ~k_1,\cdots,k_l \geqslant0,~ k_1+\cdots +k_l=r-l- \floor {\dfrac{rj}{m}} 	\Big\},
	\end{align*}
	and for $ r-\floor{\dfrac{r}{m}} < l \leqslant r $,
	\begin{equation*}
	\Gamma(P_1,P_2,\cdots, P_l) =\varnothing.
	\end{equation*}
\end{thm}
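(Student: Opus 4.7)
The plan is to mirror the proof of Theorem~\ref{thm:Gammapinf}, inducting on $l \geq 2$ with the base case $l=2$ supplied by Theorem~\ref{thm:Gamma2points}. Write $S_l$ for the right-hand side in the statement, $\Gamma_l:=\Gamma(P_1,\ldots,P_l)$, and $H_l:=H(P_1,\ldots,P_l)$. The emptiness claim for $l > r-\lfloor r/m\rfloor$ will follow at the end: any $\boldsymbol{\delta}_{\mathbf{k},j}\in S_l$ forces $0\leq \sum k_i = r-l-\lfloor rj/m\rfloor$ and hence, via $j\geq 1$, $l\leq r-\lfloor r/m\rfloor$; once $\Gamma_l\subseteq S_l$ is in hand this yields $\Gamma_l=\varnothing$ in the excluded range.

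For the inclusion $S_l\subseteq\Gamma_l$, first exhibit membership in $H_l$ using the function
\[
h \;=\; z^{-mk_1-j}\prod_{i=2}^{l}(x-\alpha_i)^{k_1-k_i}\prod_{\mu=l+1}^{r}(x-\alpha_\mu)^{k_1+1},
\]
and use Proposition~\ref{prop:divisor} to compute its divisor. The coefficient at each $P_i$ with $1\leq i\leq l$ is $-(mk_i+j)$; the coefficient at $P_\mu$ for $\mu>l$ is $m-j>0$ since $j\leq m-1-\lfloor m/r\rfloor$; and the coefficient at $P_\infty$ evaluates to $rj-m\lfloor rj/m\rfloor$, which is strictly positive because $\gcd(m,r)=1$ (a consequence of $\gcd(m,r\lambda)=1$) and $1\leq j\leq m-1$. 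Hence $(h)_\infty=\sum_{i=1}^{l}(mk_i+j)P_i$. To establish minimality of $\boldsymbol{\delta}_{\mathbf{k},j}$ in $\{\mathbf{p}\in H_l : p_1=mk_1+j\}$, assume a counterexample $\mathbf{u}$ with $u_l=mk_l+j-t$, $t\geq 1$, and split into the cases $j>t$ and $j\leq t$ exactly as in Theorem~\ref{thm:Gammapinf}: multiply a function realizing $\mathbf{u}$ by $z^{j-t}(x-\alpha_l)^{k_l}$ or by $(x-\alpha_l)^{k_l}$ respectively, producing a vector in $H_{l-1}$ or in $H_{\infty,l-1}$ that violates the minimality of an explicit element of $S_{l-1}$ or $S_{\infty,l-1}$, invoking the inductive hypothesis or Theorem~\ref{thm:Gammapinf} together with Lemma~\ref{lem:gamma_l} to reach a contradiction.

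For the reverse inclusion, suppose $\mathbf{n}=(mk_1+j_1,\ldots,mk_l+j_l)\in\Gamma_l\setminus S_l$, with $1\leq j_i\leq m-1$ and $k_i\geq 0$ as forced by Lemma~\ref{lem:gamma_linGapsets} and Lemma~\ref{lem:Weier_gap at one place}. Choosing $j_l=\max_i j_i$ and multiplying a realizing function by $(x-\alpha_l)^{k_l+1}/(x-\alpha_1)^{k_l+1}$ transplants the pole at $P_l$ onto $P_1$, yielding a vector in $H_{l-1}\cap\mathbb{N}^{l-1}$. Lemma~\ref{lem:Weisemigp} then locates $\tilde{\mathbf{u}}\in\tilde{\Gamma}_{l-1}$ below this vector; re-indexing its nonzero coordinates and applying the inductive hypothesis identifies $\tilde{\mathbf{u}}$ with an element of some $S_k$ (or of the $\infty$-version covered by Theorem~\ref{thm:Gammapinf}). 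Using $\gcd(m,r)=1$ with the matching first coordinate forces a common shift $j'=j_1$, and then the analogues of Cases (1), (2), (a), (b) in the proof of Theorem~\ref{thm:Gammapinf} construct an auxiliary $\tilde{\mathbf{v}}\in H_l$ with $\tilde{\mathbf{v}}\preceq\mathbf{n}$ and $\tilde{\mathbf{v}}\neq\mathbf{n}$, contradicting the minimality of $\mathbf{n}$.

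The main obstacle, as in Theorem~\ref{thm:Gammapinf}, is the nested case analysis in the minimality reductions: at each substitution one must verify that the auxiliary vector produced after multiplying or dividing by powers of $z$ and $x-\alpha_i$ has its nonzero coordinates summing to exactly the right value so that it lands in $S_{l-1}$, $S_{\infty,l-1}$, or some $S_{k+1}$, and therefore can be attacked by the inductive hypothesis or by Theorem~\ref{thm:Gammapinf}. Tracking the sum-of-exponents invariant $\sum k_i = r-l-\lfloor rj/m\rfloor$ through each multiplication is the delicate but essentially mechanical bookkeeping part.
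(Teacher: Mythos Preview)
Your plan is the paper's: induct on $l$ from Theorem~\ref{thm:Gamma2points}, show $S_l\subseteq H_l$ via an explicit function, argue minimality by the $j>t$ versus $j\leq t$ split, and for $\Gamma_l\subseteq S_l$ transplant the $P_l$-pole to $P_1$ and invoke Lemma~\ref{lem:Weisemigp}. Two places in your sketch need adjustment, though. In the forward direction, mirroring Theorem~\ref{thm:Gammapinf} verbatim is not quite enough: since $h$ here has no pole at $P_\infty$, the multiplier $z^{j-t}(x-\alpha_l)^{k_l}$ in the case $j>t$ may or may not create one, so the paper splits into a subcase landing in $H_{\infty,l-1}$ (compared against an explicit element of $S_{\infty,l-1}$ via Theorem~\ref{thm:Gammapinf}) and a subcase landing in $H_{l-1}$ (inductive hypothesis); and for $j\leq t$ the paper uses $(x-\alpha_l)^{k_l}/(x-\alpha_2)^{k_l}$ rather than $(x-\alpha_l)^{k_l}$, precisely to avoid $P_\infty$ and stay in $H_{l-1}$. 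In the reverse direction your description drifts: $P_\infty$ never enters that half of the argument, so neither the ``$\infty$-version'' parenthetical nor the appeal to $\gcd(m,r)=1$ is relevant. The paper matches the \emph{second} coordinate, $u_2=n_2$, and the identity $mT_2+j'=mk_2+j_2$ with $1\leq j',j_2\leq m-1$ forces $j'=j_2$ by reduction modulo $m$ alone. Matching $u_2=n_2$ rather than the bumped first coordinate is also what guarantees that the support of $\tilde{\mathbf{u}}$ has size $k\geq 2$ (since $n_2\in G(P_2)$ rules out $k=1$), keeping the subsequent case analysis within reach of the inductive hypothesis.
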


\begin{proof}
	The proof of this theorem is rather technical though it is similar to that of
	Theorem~\ref{thm:Gammapinf}, and so is showed in details here.
	Let $ \boldsymbol{\gamma}_{\mathbf{k},j} := (mk_1+j, mk_2+j,\cdots, mk_l+j) $ and we emphasize that it will only be used to describe vectors where $ 1 \leqslant j \leqslant m-1- \left \lfloor \dfrac{m}{r} \right \rfloor, ~k_1,\cdots,k_l \geqslant 0 $.
	For $ 2 \leqslant l \leqslant r $, set
	\begin{align*}
	S_l &:= \Big\{  \boldsymbol{\gamma}_{\mathbf{k},j} \in \mathbb{N}^{l}~\Big|~
	 \sum_{i=1}^{l}k_i=r-l-  \floor{ \dfrac{rj}{m} } \Big\},\\
	 \Gamma_l &:=\Gamma(P_1,P_2,\cdots, P_l).
	  \end{align*}
	 Then one have
	 $ 2\leqslant l \leqslant r-\floor{\dfrac{r}{m}} $, as $ 0 \leqslant  \sum_{i=1}^{l}k_i=r-l- \floor{\dfrac{rj}{m}} $. In the following, we will prove that $ \Gamma_l =S_l$ by induction on $ l $ for $ 2 \leqslant l \leqslant r-\floor{\dfrac{r}{m}} $. By Theorem~\ref{thm:Gamma2points}, one can see that
		\begin{align*}
		\Gamma_2 = \Big\{
		\boldsymbol{\gamma}_{(k_1,k_2),j}\in\mathbb{N}^2 ~\Big|
           ~ k_1+k_2=r-2- \left \lfloor \dfrac{rj}{m} \right \rfloor \Big\}=S_2,
		\end{align*}
	    which settles the case where $ l=2 $. We now proceed by induction on $ l \geqslant 3 $. Assume that $ \Gamma_i =S_i$ holds for all $ 2 \leqslant i \leqslant l-1 $.
	    First, we claim that $ S_l \subseteq \Gamma_l $. Let
	    $ \boldsymbol{\gamma}_{\mathbf{k},j} \in  S_l $. Then
	    \begin{align*}
	    \left( \dfrac{\prod_{i=l+1}^r(x-\alpha_i)^{k_1+1}}{z^{mk_1+j} \prod_{i=2}^{l} (x-\alpha_i)^{k_i-k_1 } }\right)_{\infty}
	    = \sum_{i=1}^{l}(mk_i+j)P_i,
	    \end{align*}
	    since $ \sum_{i=1}^{l}k_i=r-l-  \floor{ \dfrac{rj}{m} } $.
     Hence, $ \boldsymbol{\gamma}_{\mathbf{k},j} \in H_l $,  where $ H(P_1,\cdots,P_l) $ is denoted by $ H_l  $ for convenience.
	
	    In order to show that $ \boldsymbol{\gamma}_{\mathbf{k},j} \in \Gamma_l   $, it suffices to prove that $ \boldsymbol{\gamma}_{\mathbf{k},j} $ is minimal in $ \{ \mathbf{p}\in H_l~|~ p_1=mk_1+j \} $. Suppose that $ \boldsymbol{\gamma}_{\mathbf{k},j} $ is not minimal in
	    \begin{align*}
	    \{ \mathbf{p}\in H_l~|~ p_1=mk_1+j \}.
	    \end{align*}
	    Then there exists $ \mathbf{u} =(u_1,\cdots,u_l)\in H_l $ with $ u_1=mk_1+j $, $ \mathbf{u} \preceq \boldsymbol{\gamma}_{\mathbf{k},j} $ and $ \mathbf{u} \neq \boldsymbol{\gamma}_{\mathbf{k},j} $. Let $ h \in \mathbb{F}_q(x) $ be such that $ (h)_{\infty} = u_1P_1+\cdots+u_lP_l $. Also notice that $ v_{P_{\infty}}(h) \geqslant 0 $. Without loss of generality, we may assume that $ u_l<mk_l+j $ as $ \mathbf{u} \neq \boldsymbol{\gamma}_{\mathbf{k},j} $ gives $ u_i < mk_i+j $ for some $ 2\leqslant i \leqslant l $. Hence,
	    \begin{align*}
	    u_l= mk_l+j-t
	    \end{align*}
	    for some $ t \geqslant 1 $. In other words, we denote
	    \begin{align*}
	    \mathbf{u} = (mk_1+j, u_2,\cdots,u_{l-1}, mk_l+j-t),
	    \end{align*}
	    where $ 0 \leqslant u_i \leqslant mk_i+j $ for all $ 2\leqslant i \leqslant l-1 $. Thus, there are two cases to consider:
	    \begin{enumerate}
	    	\item[(1)] $ j > t $
	    	\item[(2)] $ j \leqslant t $.
	    \end{enumerate}

	 Case (1): Suppose $ j > t $. In this case, we take $ v_0 :=\max\{  r(j-t)+mk_l-v_{P_{\infty}}(h) ,0\} $, $v_i= \max \{ u_i-(j-t),0 \} $ for $ 2 \leqslant i \leqslant l-1 $. Then
	 \begin{align*}
	 \left(  hz^{j-t} (x-\alpha_l)^{k_l}
	 \right)_{\infty} =
	 v_0 P_{\infty}  +(mk_1+t)P_1+ \sum_{i=2}^{l-1} v_i P_i.
	 \end{align*}
	 There are two special subcases to consider:
	  \begin{enumerate}
	       	\item[(a)] $  v_0 > 0 $,
	       	\item[(b)] $  v_0 = 0 $.
	  \end{enumerate}
	
	 Subcase (a): Suppose $  v_0 > 0 $. Then
	 \begin{equation*}
	 \mathbf{v} :=
	 (v_0, mk_1+t,v_2,\cdots,v_{l-1})\in H_{\infty,l-1}.
	 \end{equation*}	
	 Set \begin{equation*}
	 \rho_0 :=k_l+1+ \floor{\dfrac{rj}{m}} .
	 \end{equation*}
	 Since $ \sum_{i=1}^{l}k_i=r-l-  \floor{ \dfrac{rj}{m} } $, it follows from Theorem~\ref{thm:Gammapinf} that
	 \begin{equation*}
	 \mathbf{w} :=
	 (m\rho_0-rt, m k_1+t,\cdots,mk_{l-1}+t) \in S_{\infty,l-1}.
	 \end{equation*}
	 Note that $ S_{\infty,l-1}=\Gamma_{\infty,l-1}  $. Hence
	 \begin{equation*}
	 \mathbf{w} \in \Gamma_{\infty,l-1}.
	 \end{equation*}
	 By Lemma~\ref{lem:gamma_l}, $ \mathbf{w} $	is minimal in the set
	 $ \{ \mathbf{p}\in H_{\infty,l-1}~|~p_1=mk_1+t \} $.
	
	 In the following, we are going to show that $ \mathbf{v}\preceq \mathbf{w} $. One easily check that $ v_i \leqslant mk_i+t =w_i $  for $ 2 \leqslant i \leqslant l-1 $.
	Since
	 \begin{equation*}
      0 \leqslant \dfrac{rj}{m} -\floor{\dfrac{rj}{m}} <1,
	 \end{equation*}
	 we obtain $ v_0 \leqslant r(j-t)+mk_l < m\rho_0-rt=w_0 $.

	 Now we have
	 \begin{align*}
	 & \mathbf{v} \in \{\mathbf{p}\in H_{\infty, l-1}~|~p_1=mk_1+t \},\\
	 & \mathbf{v} \preceq \mathbf{w}, \text{ and} \\
	 & \mathbf{v} \neq \mathbf{w},
	 \end{align*}
	 which is a contradiction to the minimality of $ \mathbf{w} $  in
	 $ \{ \mathbf{p}\in H_{\infty,l-1}~|~p_1=mk_1+t \} $.
	
	Subcase (b): Suppose $  v_0 = 0 $. It follows that
	\begin{align*}
	\mathbf{v} &:=
	(mk_1+t,v_2,v_3,\cdots,v_{l-1})\in H_{l-1},\\
	\mathbf{w} &:=
	( m k_1+t,m \rho_2+t,mk_3+t, \cdots,mk_{l-1}+t) \in S_{l-1},
	\end{align*}
	where $ \rho_2 :=k_2+k_l+1 $.
	By the induction hypothesis, $ S_{l-1}=\Gamma_{l-1}  $, and so
	\begin{equation*}
	\mathbf{w} \in \Gamma_{l-1}.
	\end{equation*}
	From Lemma~\ref{lem:gamma_l}, $ \mathbf{w} $	is minimal in the set
	$ \{ \mathbf{p}\in H_{l-1}~|~p_1=mk_1+t \} $.

	Similarly, we have
	\begin{align*}
	& \mathbf{v} \in \{\mathbf{p}\in H_{ l-1}~|~p_1=mk_1+t \},\\
	& \mathbf{v} \preceq \mathbf{w}, \text{ and} \\
	& \mathbf{v} \neq \mathbf{w},
	\end{align*}
	contradicting the minimality of $ \mathbf{w} $  in
	$ \{ \mathbf{p}\in H_{l-1}~|~p_1=mk_1+t \} $.
	
	Case (2): Suppose $ j \leqslant t $. In this case, we have
	\begin{equation*}
	\left( \dfrac{h (x-\alpha_l)^{k_l}}{(x-\alpha_2)^{k_l}}\right)_{\infty}
	=(mk_1+j)P_1+(u_2+mk_l)P_2 +\sum_{i=3}^{l-1} u_i P_i,
	\end{equation*}		
	which implies that
		 \begin{equation*}
		 \mathbf{v} :=
		 (mk_1+j, u_2+mk_l,u_3,\cdots,u_{l-1})\in H_{l-1}.
		 \end{equation*}
	Then
		 \begin{equation*}
		 \mathbf{w} :=
		 (mk_1+j, m \rho_2+j,mk_3+j,\cdots,mk_{l-1}+j) \in S_{l-1},
		 \end{equation*}
		 where 	$ \rho_2 =k_2+k_l+1 $.
	We will find a contradiction if we show that $ \mathbf{v}\preceq \mathbf{w} $ and $ \mathbf{v}\neq \mathbf{w} $. Clearly, $ v_i=u_i \leqslant mk_i+j =w_i $  for $ 3 \leqslant i \leqslant l-1 $.
	It suffices to prove that $ u_2+mk_l < m \rho_2+j $. But this inequality always holds as
	\begin{equation*}
	 u_2+mk_l \leqslant m k_2+j+mk_l <  m \rho_2+j .
	\end{equation*}	
	
	Since both cases (1) and (2) yield a contradiction, it must be the case that
	$ \boldsymbol{\gamma}_{\mathbf{k},j} $ is minimal in $ \{ \mathbf{p}\in H_l~|~ p_1=mk_1+j \} $. Therefore, by the definition of $ \Gamma_l $, we have
	$ \boldsymbol{\gamma}_{\mathbf{k},j} \in \Gamma_l $. This completes the proof of the claim that $ S_l \subseteq \Gamma_l $.
	
	Next, we will show that  $ \Gamma_l \subseteq S_l $. Suppose not, and we suppose that there exists $ \mathbf{n}=(n_1,\cdots,n_l) \in \Gamma_l \setminus S_l $. Then there exists $ h \in \mathbb{F}_q(x) $ such that $ (h)_{\infty} = n_1P_1+\cdots+n_lP_l $. By Lemma~\ref{lem:gamma_linGapsets}, we have
	\begin{equation*}
	\mathbf{n} \in \Gamma_l \subseteq G(P_1) \times \cdots \times G(P_l).
	\end{equation*}
	It follows that
		\begin{equation*}
		\mathbf{n} =(mk_1+j_1,mk_2+j_2,\cdots,mk_l+j_l),
		\end{equation*}
	where $ 1 \leqslant j_i \leqslant m-1 $ and $ k_i \geqslant 0 $
	for all $ 1 \leqslant i \leqslant l $. Without loss of generality, we may
	assume that $ j_l=\max\{ j_i~|~2 \leqslant i \leqslant l-1  \} $. Then
		\begin{equation*}
		\left( \dfrac{h (x-\alpha_l)^{k_l+1}}{(x-\alpha_{1})^{k_l+1}}\right)_{\infty}
		= (n_{1}+m(k_l+1))P_{1} + \sum_{i=2}^{l-1}n_i P_i,
		\end{equation*}	
	   which implies that $ (n_{1}+m(k_l+1), n_2, \cdots, n_{l-1}) \in H_{l-1} \cap \mathbb{N}^{l-1}$.
	   By Lemma~\ref{lem:Weisemigp}, there exists $ \tilde{\mathbf{u}}=(u_1,\cdots,u_{l-1})\in \tilde{\Gamma}_{l-1} $ such that
	\begin{equation}\label{eq:u(l-1)}
	\tilde{\mathbf{u}}\preceq (n_{1}+m(k_l+1), n_2, \cdots, n_{l-1}),
	\end{equation}
	and $ u_2=n_2=mk_2+j_2 $. If $ u_1 \leqslant n_1 $, then $ (u_1,\cdots,u_{l-1},0) \preceq \mathbf{n} $. This yields a contradiction as
	$ \mathbf{n}  $ is minimal in the set $ \{ \mathbf{p}\in H_{l}~|~p_2=n_2 \} $.
	Thus, we have $ u_1 > n_1 >0 $. By the induction hypothesis,
		\begin{equation*}
		\mathbf{u} := \boldsymbol{\gamma}_{(T_{i_1},  \cdots, T_{i_k}),j'}
		\in S_k=\Gamma_k,
		\end{equation*}
	for some $ k $, $ 2 \leqslant k \leqslant l-1 $. Remember that $ \mathbf{u} $ is constructed from the nonzero coordinates of $ \tilde{\mathbf{u}} $ and $ \sum_{s=1}^k T_{i_s} =r-k- \floor{\dfrac{rj'}{m}} $.

Let $ \{i_1,\cdots, i_{l-1}\}=\{1,\cdots,l-1\} $ be an index set such that $ i_1 <i_2 <\cdots<i_k $ and
	\begin{align*}
	u_{i_s}=\left\{\begin{array}{lll}
		mT_{i_s}+j' && \text{ if } 1\leqslant s \leqslant k, \\
		0     &&    \text{ if } k < s \leqslant l-1.
		\end{array}
		\right.
	\end{align*}
 Note that $ i_1=1 $ and $ i_2=2 $
 since $ u_1 > n_1 >0 $ and $ u_2=n_2>0 $. The fact
 \begin{equation*}
 mT_2+j'=u_2=mk_2+j_2
 \end{equation*} gives that $ m $ divides $ j'-j_2 $. This forces $ j'=j_2 $, $ T_2=k_2 $. So we have
 		\begin{align*}
 		{\mathbf{u}} &= \boldsymbol{\gamma}_{(T_1,T_2, T_{i_3} \cdots, T_{i_k}),j_2},\\
 		u_{i_s}& =\left\{\begin{array}{lll}
 		mT_{i_s}+j_2 && \text{ if } 1\leqslant s \leqslant k, \\
 		0     &&    \text{ if } k < s \leqslant l-1,
 		\end{array}
 		\right.
 		\end{align*}
      with $ T_1+T_2+ \sum_{s=3}^k T_{i_s}= r-k- \floor{\dfrac{rj_2}{m}} $.
      Specifically, $ u_1=mT_1+j_2 $. In the following, we seperate the proof into two cases:
      	    \begin{enumerate}
      	    	\item[(1)] $ u_1-m(k_l+1) \geqslant 0 $,
      	    	\item[(2)] $ u_1-m(k_l+1) <0 $.
      	    \end{enumerate}
 	Case (1): Suppose $ u_1-m(k_l+1) \geqslant 0 $. Since $ m \nmid j_2 $,
 	it follows that $ u_1-m(k_l+1) >0 $. Set
 			\begin{equation*}
 			\tilde{\mathbf{v}} :=(u_1-m(k_l+1),u_2,\cdots,u_{l-1},mk_l+j_2).
 			\end{equation*}
 	By Equation~\eqref{eq:u(l-1)}, and since $ j_2 \leqslant j_l $, we have
 	$ \tilde{\mathbf{v}}\preceq \mathbf{n} $. Denote
  	 \begin{equation*}
  	 \mathbf{v} :=\boldsymbol{\gamma}_{(T_1-k_l-1, T_2, T_{i_3},  \cdots, T_{i_k},k_l),j_2}.
  	 \end{equation*}	
 	Since $ (T_1-k_l-1)+ T_2+\sum_{s=3}^{k} T_{i_s}+k_l =r-(k+1)- \floor{\dfrac{rj_2}{m}}  $, this implies that $ \mathbf{v} \in S_{k+1} $. Since $S_{k+1} \subseteq \Gamma_{k+1} $, it follows that
 	 $ \tilde{\mathbf{v}} \in \tilde{\Gamma}_l \subseteq H_l $. Notice that
 	  $ \tilde{\mathbf{v}}\preceq \mathbf{n} $ and $ \mathbf{n} \in \Gamma_l   $. Therefore, $ \mathbf{n} = \tilde{\mathbf{v}} $ as otherwise $ \mathbf{n} $ is not minimal in $ \{ \mathbf{p}\in H_{l}~|~p_2=n_2 \} $. Hence, $ k+1=l $ and
 	  $ \mathbf{n} = \tilde{\mathbf{v}}= \mathbf{v}\in S_{l}  $, which is a contradiction.
 	
 	Case (2): Suppose that $ u_1-m(k_l+1) <0 $. There are two subcases to consider:
      \begin{enumerate}
         \item[(a)] $ k_1 > 0 $,
         \item[(b)] $ k_1 = 0 $.
      \end{enumerate}
     Subcase (a): Suppose $ k_1 > 0 $. Let
   \begin{equation*}
     \tilde{\mathbf{v}} :=(m(k_1-1)+j_2,u_2,\cdots,u_{l-1},m(T_1-k_1)+j_2).
   \end{equation*}
   We are going to show that $ \tilde{\mathbf{v}}\preceq \mathbf{n} $ and $ \tilde{\mathbf{v}}\neq \mathbf{n} $.    	
	Obviously, $ u_i \leqslant n_i $	for $ 2\leqslant i \leqslant l-1 $. Since
	$ j_2-m<0< j_1 $, we have
	\begin{equation*}
	   v_1=m(k_1-1)+j_2 < mk_1+j_1.
	 \end{equation*}
	The fact that $ u_1=mT_1+j_2 < m(k_l+1) $ gives
	\begin{equation*}
	v_l=m(T_1-k_1)+j_2 \leqslant mT_1+j_2-m < mk_l <mk_l+j_l= n_l.
	\end{equation*}
	So  $ \tilde{\mathbf{v}}\preceq \mathbf{n} $ and $ \tilde{\mathbf{v}}\neq \mathbf{n} $. The nonzero coordinates of $ \tilde{\mathbf{v}} $	will form a new vector expressed as
	\begin{equation*}
	\mathbf{v} :=\boldsymbol{\gamma}_{(k_1-1, T_2, T_{i_3},  \cdots, T_{i_k}, T_1-k_1),j_2}.
	\end{equation*}
	We claim that $ \mathbf{v} \in S_{k+1} $.
	Clearly, $ k_1-1 \geqslant 0 $. Suppose that $ T_1-k_1 <0 $, then it must be
	\begin{equation*}
	u_1=mT_1+j_2 \leqslant mk_1+j_1 =n_1,
	\end{equation*} contradicting the fact that $ u_1>n_1 $.
	Therefore, $ T_1-k_1 \geqslant 0 $. It is easy to see that
	\begin{equation*}
      (k_1-1)+ T_2+ \sum_{s=3}^{k}T_{i_s}+(T_1-k_1)=r-(k+1)-\floor{\dfrac{rj_2}{m}}.
	\end{equation*}
	So $ \mathbf{v} \in S_{k+1} \subseteq \Gamma_{k+1} $.
	Then $ \tilde{\mathbf{v}} \in H_l $ and so $ \tilde{\mathbf{v}} \in \{ \mathbf{p}\in H_{l}~|~p_2=n_2 \} $. This contradicts the minimality of $ \mathbf{n} $  in $\{ \mathbf{p}\in H_{l}~|~p_2=n_2 \} $, concluding the proof in this subcase.
	
	Subcase (b): Suppose $ k_1 = 0 $. Set
   \begin{equation*}
   \tilde{\mathbf{v}} :=(0,u_2,\cdots,u_{l-1},mT_1+j_2).
   \end{equation*} 	
	Since $ u_1=mT_1+j_2 < m(k_l+1) $ means that $ T_1 \leqslant k_l  $, we have
   \begin{equation*}
   v_l=mT_1+j_2 \leqslant mk_l+j_l=n_l,
   \end{equation*}	
	as $ j_2 \leqslant j_l $. This yields that $ \tilde{\mathbf{v}}\preceq \mathbf{n} $ and $ \tilde{\mathbf{v}}\neq \mathbf{n} $. Let
	\begin{equation*}
	\mathbf{v} :=\boldsymbol{\gamma}_{(T_1, T_2, T_{i_3},  \cdots, T_{i_k} ),j_2}.
	\end{equation*}	
	One knows that $ \mathbf{v} \in S_{k}$ as
	 $ \sum_{s=1}^{k} T_{i_s} =r-k- \floor{\dfrac{rj_2}{m}}  $. As before, it follows that $ \tilde{\mathbf{v}} \in H_l $ and $ \tilde{\mathbf{v}} \in \{ \mathbf{p}\in H_{l}~|~p_2=n_2 \} $. But $ \tilde{\mathbf{v}}\preceq \mathbf{n} $ and $ \tilde{\mathbf{v}}\neq \mathbf{n} $, contradicting the minimality of $ \mathbf{n} $  in $\{ \mathbf{p}\in H_{l}~|~p_2=n_2 \} $. The proof in this subcase is completed.
	
	 Since both cases (1) and (2) yield a contradiction, it must be the case that no such $\mathbf{n} $ exists. Hence, $ \Gamma_l \setminus S_l =\varnothing $. This establishes that $ \Gamma_l \subseteq S_l $, concluding the proof that
	 $ \Gamma_l = S_l $.
	
	  Now suppose that $ r-\floor{\dfrac{r}{m}} < l \leqslant r $. If $ \boldsymbol{\gamma}_{\mathbf{k},j} \in \Gamma_l $, then
	\begin{equation*}
	0 \leqslant \sum_{i=1}^{l} k_i = r-l-\floor{\dfrac{rj}{m}} < \floor{\dfrac{r}{m}}-\floor{\dfrac{rj}{m}},
	\end{equation*}	
	which is a contradiction as $ 1 \leqslant j \leqslant m-1 $. Therefore, $ \Gamma_l =\varnothing $.
\end{proof}

 \begin{remark}
 We mention that Theorem~\ref{thm:lpoint} is an extension of Theorem 10 in~\cite{matthews2004weierstrass}, which settles the case where $ r=q $, $ m=q+1 $.
 \end{remark}

\section{Examples over Kummer extensions}\label{sec:Examples}

 In this section we provide several examples to illustrate our results given in the previous section.

\begin{example}
 Let $ (r,m,\lambda)=(7,5,1) $. The Kummer extension $F_{\mathcal{K}}/\mathbb{F}_q(x) $ is defined by $ y^5=f(x) $ with $ \deg(f)=7 $, where $ q $ is a prime power.
 Let
 \begin{equation*}
 (P_1,P_2,\cdots,P_7)
 \end{equation*} be a $ 7 $-tuple of totally ramified places with the exception of $ P_{\infty} $.
By Theorem~\ref{thm:Gamma2points},
\begin{align*}
\Gamma_2=\left\{\begin{array}{ll}
(1, 21), (2, 17),(3, 8),(4, 4),
(6, 16), (7, 12), (8, 3), \\
(11, 11),(12, 7),
(16, 6), (17, 2),
(21, 1)
\end{array}
 \right\}.
\end{align*}
Applying Theorem~\ref{thm:lpoint}, one can obtain that

\begin{align*}
\Gamma_3&=\left\{\begin{array}{ll}
  (1, 1, 16),
  (1, 6, 11),
  (1, 11, 6),
  (1, 16, 1),
    (2, 2, 12),
    (2, 7, 7), \\
    (2, 12, 2),
   (3, 3, 3),
  (6, 1, 11),
  (6, 6, 6),
  (6, 11, 1),
   (7, 2, 7), \\
  (7, 7, 2),
  (11, 1, 6),
  (11, 6, 1),
    (12, 2, 2),
  (16, 1, 1)
\end{array}
\right\},\\
\Gamma_4&=\left\{\begin{array}{ll}
 (1, 1, 1, 11),
 (1, 1, 6, 6),
 (1, 1, 11, 1),
 (1, 6, 1, 6),
 (1, 6, 6, 1), \\
 (1, 11, 1, 1),
  (2, 2, 2, 7),
  (2, 2, 7, 2),
  (2, 7, 2, 2),
  (7, 2, 2, 2),  \\
 (6, 1, 1, 6),
 (6, 1, 6, 1),
 (6, 6, 1, 1),
 (11, 1, 1, 1)
\end{array}
\right\},\\
\Gamma_5&=\left\{\begin{array}{ll}
 (1, 1, 1, 1, 6),
 (1, 1, 1, 6, 1),
 (1, 1, 6, 1, 1), \\
 (1, 6, 1, 1, 1),
 (2, 2, 2, 2, 2),
 (6, 1, 1, 1, 1)
 \end{array}
\right\},\\
\Gamma_6&=\left\{
(1, 1, 1, 1,  1, 1)
\right\},\\
\Gamma_7&=\varnothing.
\end{align*}
\end{example}

\begin{example}
	Let $ (r,m,\lambda)=(5,9,1) $. The Kummer extension $F_{\mathcal{K}}/\mathbb{F}_q(x) $ is defined by $ y^9=f(x) $ with $ \deg(f)=5 $, where $ q $ is a prime power.
	Let
	\begin{equation*}
	(P_{\infty},P_1,P_2,\cdots,P_5)
	\end{equation*} be a $ 6 $-tuple of totally ramified places.
	
It follows from Lemma~\ref{lem:Gammap2inf} that
	\begin{align*}
	\Gamma_{\infty,1}=\left\{\begin{array}{ll}
(1, 7),
(2, 14),
(3, 21),
(4, 28),
(6, 6),
(7, 13),
(8, 20),\\
(11, 5),
(12, 12),
(13, 19),
(16, 4),
(17, 11),
(21, 3), \\
(22, 10),
(26, 2),
(31, 1)
	\end{array}
	\right\}.
	\end{align*}	
Applying Theorem~\ref{thm:Gammapinf}, one can obtain that

\begin{align*}
\Gamma_{\infty,2}&=\left\{\begin{array}{ll}	
   (2, 5, 5),
   (3, 3, 12),
  (3, 12, 3),
  (4, 1, 19),
 (4, 10, 10), \\
 (4, 19, 1),
 (7, 4, 4),
 (8, 2, 11),
 (8, 11, 2),
 (12, 3, 3), \\
 (13, 1, 10),
 (13, 10, 1),
 (17, 2, 2),
 (22, 1, 1)
	\end{array}
	\right\},\\
\Gamma_{\infty,3}&=\left\{\begin{array}{ll}	
 (3, 3, 3, 3),
 (4, 1, 1, 10),
 (4, 1, 10, 1), \\
 (4, 10, 1, 1),
  (8, 2, 2, 2),
  (13, 1, 1, 1)
	\end{array}
	\right\},\\	
\Gamma_{\infty,4}&=\left\{
  (4 ,    1  ,   1  ,   1  ,   1)
\right\},\\
\Gamma_{\infty,5}&=\varnothing.
\end{align*}	
\end{example}


%





\ifCLASSOPTIONcaptionsoff
  \newpage
\fi



%
\bibliographystyle{IEEEtran}
\bibliography{paper}

%






\end{document}